\documentclass[11pt]{article}
\usepackage{fullpage}
\usepackage{times}
\usepackage{subfigure}
\usepackage{color}
\usepackage{url}
\usepackage{graphicx}
\usepackage{amsmath}
\usepackage{amsthm}
\usepackage{amssymb}
\usepackage{algorithm}
\usepackage{algpseudocode}
\usepackage{mathtools}

\usepackage{tikz}
\usetikzlibrary{positioning, calc, chains}

{\makeatletter
 \gdef\xxxmark{%
   \expandafter\ifx\csname @mpargs\endcsname\relax 
     \expandafter\ifx\csname @captype\endcsname\relax 
       \marginpar{xxx}
     \else
       xxx 
     \fi
   \else
     xxx 
   \fi}
 \gdef\xxx{\@ifnextchar[\xxx@lab\xxx@nolab}
 \long\gdef\xxx@lab[#1]#2{{\bf [\xxxmark #2 ---{\sc #1}]}}
 \long\gdef\xxx@nolab#1{{\bf [\xxxmark #1]}}
}

\newcommand{\abs}[1]{|#1|}

\newcommand{\norm}[2]{\lVert#2\rVert_{#1}}

\newcommand{\wh}{\widehat}

\newcommand{\twid}[1]{\tilde{#1}}

\DeclareMathOperator{\supp}{supp}

\def\Z{\mathbb{Z}}
\def\C{\mathbb{C}}
\def\F{\mathcal{F}}
\def\B{\mathbb{B}}

\newcommand{\expect}{{\bf \mbox{\bf E}}}
\newcommand{\prob}{{\bf \mbox{\bf Pr}}}

\definecolor{gray}{rgb}{0.5,0.5,0.5}
\newcommand{\e}{{\epsilon}}

\newtheorem{theorem}{Theorem}[section]
\newtheorem{lemma}[theorem]{Lemma}

\newtheorem{definition}[theorem]{Definition}
\newtheorem{remark}[theorem]{Remark}

\newenvironment{proofof}[1]{\noindent{\bf Proof of #1:}}{$\qed$\par}
\usepackage{hyperref}
\hypersetup{
bookmarksnumbered
}

\DeclareMathOperator{\err}{Err}
\newcommand{\poly}{\text{poly}}

\newcommand{\fc}{F}
\newcommand{\gl}{\mathcal{M}_{d\times d}}


\begin{document}

\begin{titlepage}
\title{Sample-Optimal Fourier Sampling in Any Constant Dimension -- Part I}
\author{Piotr Indyk \and Michael Kapralov}

\maketitle

\begin{abstract}
We give an algorithm for $\ell_2/\ell_2$ sparse recovery from Fourier measurements using $O(k\log N)$ samples, matching the lower bound of \cite{DIPW} for non-adaptive algorithms up to constant factors for any $k\leq N^{1-\delta}$. The algorithm runs in $\tilde O(N)$ time. Our algorithm extends to higher dimensions, leading to sample complexity of $O_d(k\log N)$, which is optimal up to constant factors for any $d=O(1)$.  These are the first sample optimal algorithms for these problems. 

A preliminary experimental evaluation indicates that our algorithm has empirical sampling complexity comparable to that of other recovery methods known in the literature, while  providing strong provable guarantees on the recovery quality.
\end{abstract}
\thispagestyle{empty}
\end{titlepage}

\newcommand{\nsq}{[n]^d}


\section{Introduction}

The Discrete Fourier Transform (DFT) is a mathematical notion that allows to represent a sampled signal or function as a combination of  discrete frequencies. It is a powerful tool used in many areas of science and engineering. Its popularity stems  from the fact that signals are typically easier to  process and interpret when represented in the frequency domain. As a result, DFT plays a key role in  digital signal processing, image processing, communications,  partial differential equation solvers, etc.  Many of these applications rely on the fact that most of the Fourier coefficients of the signals are small or equal to zero, i.e., the signals are (approximately) {\em sparse}. For example, sparsity provides the rationale underlying compression schemes for audio, image and video signals, since keeping the top few coefficients often suffices to preserve most of the signal energy. 

An attractive property of sparse signals is that they can be acquired from only a small number of samples. Reducing the sample complexity is highly desirable as it implies a reduction in signal acquisition time, measurement overhead
and/or communication cost. For example, one of the main goals in medical imaging is to reduce the sample complexity in order to reduce the time the patient spends in the MRI machine~\cite{MRICS}, or the radiation dose received~\cite{CT_CS}.  Similarly in spectrum sensing, a lower average sampling rate enables the fabrication of efficient analog to digital converters (ADCs) that can acquire very wideband multi-GHz signals~\cite{CS_ADC}. As a result, designing sampling schemes and the associated sparse recovery algorithms has been a subject of extensive research in multiple areas, such as:
\begin{itemize}
\item {\em Compressive sensing:}  The area of compressive sensing~\cite{Don,CTao}, developed over the last decade, studies the task of recovering (approximately) sparse signals from linear measurements. Although several classes of linear measurements were studied, acquisition of sparse signals using few Fourier measurements (or, equivalently, acquisition of Fourier-sparse signals using few signal samples) has been one of the key problems studied in this area. In particular, the seminal work of~\cite{CTao,RV} has shown that one can recover $N$-dimensional signals with at most $k$ Fourier coefficients using only $k \log^{O(1)} N$ samples. The recovery algorithms are based on linear programming and run in time polynomial in $N$. See~\cite{FR} for an introduction to the area.
\item {\em Sparse Fourier Transform:} A different line of research, with origins in computational complexity and learning theory, has been focused on  developing algorithms whose sample complexity {\em and} running time bounds scale with the sparsity. Many such algorithms have been proposed in the literature, including \cite{GL,KM,Man,GGIMS,AGS,GMS,Iw,Ak,HIKP,HIKP2,LWC,BCGLS,HAKI,pawar2013computing,heidersparse, IKP}. These works show that, for a wide range of signals, both the time complexity and the number of signal samples taken can be significantly sub-linear in $N$. 
\end{itemize}

The best known results obtained in both of those areas are summarized in the following table. 
For the sake of uniformity
we focus on algorithms that work for general signals and recover $k$-sparse approximations satisfying the so-called $\ell_2/\ell_2$ approximation
guarantee\footnote{Some of the algorithms~\cite{CTao,RV,CGV} can in
  fact be made deterministic, but at the cost of satisfying a somewhat
  weaker $\ell_2/\ell_1$ guarantee. Also, additional results that hold
  for exactly sparse signals are known, see e.g.,~\cite{BCGLS} and
  references therein.}.
In this case, the goal of an algorithm is as follows: given $m$ samples of the Fourier transform $\wh{x}$ of a signal $x$\footnote{Here and for the rest of this paper, we will consider the \emph{inverse} discrete Fourier transform problem of estimating a sparse $x$ from
samples of $\wh{x}$.  This leads to a simpler notation.  Note that the the forward and inverse DFTs are equivalent modulo conjugation.},  and the sparsity parameter $k$, output $x'$ satisfying
\begin{equation}
\label{e:l2l2}
\| x-x'\|_2 \le C \min_{k \text{-sparse } y }  \|x-y\|_2,
\end{equation}
The algorithms are randomized and succeed with constant probability.

\begin{figure}[h!]
\begin{center}
\begin{tabular}{|c|c|c|c|c|}
\hline
Reference & Time & Samples & Approximation & Signal model\\
\hline\hline
\cite{CTao,RV} & & & & \\
 \cite{CGV}  & $N \times m$ linear program & $O(k \log^3(k) \log(N))$ & $C=O(1)$ & worst case\\
\cite{CP} & $N \times m$ linear program & $O(k \log N)$ & $C=(\log N)^{O(1)}$ & worst case\\
\cite{HIKP2} & $O(k \log(N) \log(N/k))$ & $O(k \log(N) \log(N/k))$   &   any $C>1$ & worst case \\
\cite{GHIKPS} & $O(k \log^2 N)$ & $O(k \log N)$ & $C=O(1)$ & average case, \\
& & & & $k=\Theta(\sqrt{N})$\\
\cite{pawar2014} & $O(N \log N)$ & $O(k \log N)$ & $C=O(1)$ & average case,\\
& & & & $k=O(N^{\alpha})$, $\alpha<1$\\
\cite{IKP} &  $O(k \log^2(N)  \log^{O(1)} \log N)$ & $O(k \log(N) \log^{O(1)} \log N)$ & any $C>1$ & worst case\\
\hline
\cite{DIPW} & & $\Omega(k \log(N/k))$ & constant $C$ & lower bound\\
\hline
\end{tabular}
\end{center}
\caption{Bounds for the algorithms that recover $k$-sparse Fourier approximations . All algorithms produce an output satisfying Equation~\ref{e:l2l2} with probability of success that is at least constant.}
\end{figure}

As evident from the table, none of the results obtained so far was able to guarantee sparse recovery from the optimal number of samples, unless either the approximation factor was super-constant or the result held for average-case signals.  In fact, it was not  even known whether there is an {\em exponential time} algorithm that uses only $O(k \log N)$ samples in the worst case. 

 A second limitation, that applied to the sub-linear time algorithms in the last three rows in the table, but not to compressive sensing algorithms in the first two rows of the table,  is that those algorithms were  designed for {\em one-dimensional} signals. However, the  sparsest signals often occur in applications involving higher-dimensional DFTs, since they involve much larger signal lengths $N$. Although one can reduce, e.g.,  the two-dimensional DFT over $p \times q$ grid  to the
 one-dimensional DFT over a signal of length $pq$~\cite{GMS,Iw-arxiv}),  the reduction applies only if $p$
 and $q$ are relatively prime.
 This excludes the most typical case of
 $m \times m$ grids where $m$ is a power of $2$. The only prior
 algorithm that applies to general $m\times m$ grids, due
 to~\cite{GMS}, has $O(k \log^c N)$ sample and time complexity for a
 rather large value of $c$.  If $N$ is a power of $2$, a
 two-dimensional adaptation of the~\cite{HIKP2} algorithm (outlined in~\cite{GHIKPS}) 
has roughly $O(k \log^3 N)$ time and sample complexity, and an adaptation of~\cite{IKP} has $O(k\log^2 N(\log\log N)^{O(1)})$ sample complexity.

\paragraph{Our results} In this paper we give an algorithm that overcomes both of the aforementioned limitations. Specifically, we present an algorithm for the  sparse Fourier transform in any fixed dimension that uses only $O(k \log N)$ samples of the signal. This is the first algorithm that matches the lower bound of $\cite{DIPW}$, for $k$ up to $N^{1-\delta}$ for any constant $\delta>0$. The recovery algorithm runs in time $O(N \log^{O(1)} N)$.

In addition, we note that the algorithm is in fact quite simple. It is essentially a variant of an iterative thresholding scheme, where the coordinates of the signal are updated sequentially in order to minimize the difference between the current approximation and the underlying signal. In Section~\ref{sec:experiments} we discuss a preliminary experimental evaluation of this algorithm, which shows promising results. 

The techniques introduced in this paper have already found applications. In particular, in a followup paper~\cite{IK14b}, we give an algorithm that uses $O(k \log(N) \log^{O(1)} \log N)$ samples of the signal and has the running time of $O(k \log^{O(1)}(N)  \log^{O(1)} \log N)$ for any constant $d$. This generalizes the result of~\cite{IKP} to any constant dimension, at the expense of somewhat larger runtime.

\paragraph{Our techniques}  The overall outline of our algorithms follows the framework of~\cite{GMS, HIKP2,IKP}, which adapt the methods of~\cite{CCF,GLPS} from arbitrary linear
measurements to Fourier ones.  The idea is to take, multiple times,  a set of $B=O(k)$
linear measurements of the form
\[
\twid{u}_j = \sum_{i : h(i) = j} s_i x_i 
\]
for random hash functions $h: [N] \to [B]$ and random sign changes
$s_i$ with $\abs{s_i} = 1$.  This denotes \emph{hashing} to $B$
\emph{buckets}.  With such ideal linear measurements, $O(\log(N/k))$
hashes suffice for sparse recovery, giving an $O(k \log(N/k))$ sample
complexity.

The sparse Fourier transform algorithms approximate $\twid{u}$ using linear combinations of Fourier samples. Specifically, the coefficients of $x$ are first pseudo-randomly permuted, by re-arranging the access to $\hat{x}$  via a random affine permutation.  Then the coefficients are partitioned into buckets. This steps uses  the``filtering'' process that approximately partitions the range of $x$ into intervals (or, in higher dimension, squares) with $N/B$ coefficients each, and collapses each interval into one bucket. 
To minimize the number of samples taken, the filtering process is approximate. In particular the coefficients  contribute (``leak''') to buckets other than the one they are nominally mapped into, although that contribution is limited and controlled by the quality of the filter. The details are described in Section~\ref{sec:filter}, see also~\cite{HIKP} for further overview.


Overall, this probabilistic process ensures that
 most of the large coefficients are ``isolated'', i.e.,  are hashed to unique buckets, as well as that the contributions from the  ``tail'' of the signal $x$ to those buckets is not much greater than the average; the tail of the signal is defined as $\err_k(x)=\min_{k-\text{sparse}~y} ||x-y||_2$.
 This enables the algorithm to identify the positions of the large coefficients, as well as estimate their values, producing a sparse estimate $\chi$ of $x$.  To improve this
estimate, we repeat the process on $x - \chi$ by subtracting the
influence of $\chi$ during hashing.  The repetition will yield a good
sparse approximation $\chi$ of $x$.

To achieve the optimal number of measurements,  however, our algorithm departs from the above scheme in a crucial way: the algorithm does {\em not} use fresh hash functions in every repetition. Instead, $O(\log N)$ hash functions are chosen at the beginning of the process, such that each large coefficient is isolated by most of those functions with high probability. 
 The same hash functions are then used throughout the duration of the algorithm. Note that  each hash function requires a separate set of samples to construct the buckets, so reusing the hash functions means that the number of samples does not grow with the number of iterations. This enables us to achieve the optimal measurement bound. 
 
 At the same time reusing the hash functions creates a major difficulty: if the algorithm identifies a non-existing large coefficient by mistake and adds it to $\chi$, this coefficient will be present in the difference vector $x - \chi$ and will need to be corrected later. And unlike the earlier guarantee for the large coefficients of the {\em original} signal $x$, we do not have any guarantees that large {\em erroneous} coefficients will be isolated by the hash functions, since the positions of those coefficients are determined by those functions. Because of these difficulties, almost all prior works\footnote{We are only aware of two exceptions: the algorithms of~\cite{GHIKPS,pawar2013computing} (which were analyzed only for the easier case where the large coefficients  themselves were randomly distributed) and the analysis of iterative thresholding schemes due to~\cite{BLM} (which relied on the fact that the measurements were induced by Gaussian or Gaussian-like matrices).}  either used a fresh set of measurements in each iteration (almost all sparse Fourier transform algorithms fall into this category) or provided stronger deterministic guarantees for the sampling pattern (such as  the restricted isometry property~\cite{CTao}).
 However, the latter option required a larger number of measurements to ensure the desired properties.
 Our algorithm circumvents this difficulty by ensuring that no large coefficients are created erroneously. This is non-trivial, since the hashing process is quite noisy (e.g, the bucketing process suffers from leakage). Our solution is to recover the large coefficients in the decreasing order of their magnitude. Specifically, in each step, we recover coefficients with magnitude that exceeds a specific threshold (that decreases exponentially).
 The process is designed to ensure  that (i) all coefficients above the threshold are recovered and (ii) all recovered coefficients have magnitudes close to the threshold. In this way the set of locations of large coefficients stays fixed (or monotonically decreases) over the duration of the algorithms, and we can ensure the isolation properties of those coefficients during the initial choice of the hash functions. 

Overall, our algorithm has two key properties (i) it is iterative, and therefore the values of the coefficients estimated in one stage can be corrected in the second stage and (ii) does not require fresh hash functions (and therefore new measurements) in each iteration. Property (ii) implies that the number of measurements is determined by only a single (first) stage, and does not increase beyond that. Property (i) implies that the bucketing and estimation process can be achieved using rather ``crude'' filters\footnote{In fact, our filters are only slightly more accurate than the filters introduced in~\cite{GMS}, and require the same number of samples.}, since the estimated values can be corrected in the future stages. As a result each of the hash function require only $O(k)$ samples;  since we use $O(\log N)$ hash functions, the $O(k \log N)$ bound follows. This stands in contrast with the algorithm of $\cite{GMS}$ (which used crude filters of similar complexity but required new measurements per each iteration) or~\cite{HIKP2} (which used much stronger filters with $O(k \log N)$ sample complexity) or~\cite{IKP} (which used filters of varying quality and sample complexity). The advantage of our approach is amplified in higher dimension, as the ratio of the number of samples required by the filter to the value $k$ grows exponentially in the dimension. Thus, our filters still require $O(k)$ samples in any fixed dimension $d$, while for~\cite{HIKP2, IKP} this bound increases to $O(k \log^d N)$. 

\paragraph{Organization} We give definitions and basic results relevant to sparse recovery from Fourier measurements in section~\ref{sec:prelim}. Filters that our algorithm uses are constructed in section~\ref{sec:filter}. Section~\ref{sec:algo-linear} states the algorithm and provides intuition behind the analysis. The main lemmas of the analysis are proved in section~\ref{sec:isolated}, and full analysis of the algorithm is provided in section~\ref{sec:analysis}. Results of an experimental evaluation are presented in section~\ref{sec:experiments}, and ommitted proofs are given in Appendix~\ref{app:A}.

\section{Preliminaries}\label{sec:prelim}

For a positive even integer $a$ we will use the notation $[a]=\{-\frac{a}{2}, -\frac{a}{2}+1, \ldots, -1, 0, 1,\ldots, \frac{a}{2}-1\}$. We will consider signals of length $N=n^d$, where $n$ is a power of $2$ and $d\geq 1$ is the dimension.  We use the notation $\omega=e^{2\pi i/n}$ for the root of unity of order $n$. The $d$-dimensional forward and inverse Fourier transforms are given by
\begin{equation}\label{eq:dft-forward}
\hat x_{j}=\frac1{\sqrt{N}}\sum_{i\in \nsq}  \omega^{-i^Tj}x_i \text{~~and~~}x_{j}=\frac1{\sqrt{N}}\sum_{i\in \nsq}  \omega^{i^Tj}\hat x_i
\end{equation}
respectively, where $j\in \nsq$. We will denote the forward Fourier transform by $\F$ and 
Note that we use the orthonormal version of the Fourier transform. Thus, we have $||\hat x||_2=||x||_2$ for all $x\in \C^N$ (Parseval's identity).
We recover a signal $z$ such that 
\begin{equation*}
||x-z||_2\leq (1+\e)\min_{k-\text{~sparse~} y} ||x-y||_2
\end{equation*}
from samples of $\wh{x}$.

We will use pseudorandom spectrum permutations, which we now define.  We write $\gl$ for the set of $d\times d$ matrices  over $\Z_n$ with odd determinant.
For $\Sigma\in \gl, q\in \nsq$ and $i\in \nsq$ let 
$\pi_{\Sigma, q}(i)=\Sigma(i-q) \mod n$.
Since $\Sigma\in \gl$, this is a permutation. Our algorithm will use $\pi$ to hash heavy hitters into $B$ buckets, where we will choose $B\approx k/\e$. It should be noted that unlike many sublinear time algorithms for the problem, our algorithm does not use $O(k)$ buckets with centers equispaced in the time domain. Instead, we think of each point in time domain as having a bucket around it. This imporoves the dependence of the number of samples on the dimension $d$. We will often omit the subscript $\Sigma, q$ and simply write $\pi(i)$ when $\Sigma, q$ is fixed or clear from context. For $i, j\in \nsq$ we let $o_i(j)= \pi(j) - \pi(i)$ to be the ``offset'' of $j\in \nsq$ relative to $i\in \nsq$. We will always have $B=b^d$, where $b$ is a power of $2$. 
\begin{definition}
  Suppose that $\Sigma^{-1}$ exists $\bmod~n$. For $a, q\in \nsq$ we define the
  permutation $P_{\Sigma, a, q}$ by $(P_{\Sigma, a, q}\hat x)_i=\hat x_{\Sigma^T(i-a)} \omega^{i^T\Sigma q}$.  
\end{definition}
\begin{lemma}\label{lm:perm}
$\F^{-1}({P_{\Sigma, a, q} \hat x})_{\pi_{\Sigma, q}(i)}=x_i \omega^{a^T \Sigma i}$
\end{lemma}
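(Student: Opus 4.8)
The plan is to unfold both sides from the definitions and match terms. Write $N = n^d$ and recall $\omega = e^{2\pi i/n}$. Starting from the definition of the inverse Fourier transform applied to the permuted signal $P_{\Sigma,a,q}\hat x$, we have
\[
\F^{-1}(P_{\Sigma,a,q}\hat x)_{j} \;=\; \frac{1}{\sqrt N}\sum_{i\in\nsq} \omega^{i^T j}\,(P_{\Sigma,a,q}\hat x)_i \;=\; \frac{1}{\sqrt N}\sum_{i\in\nsq} \omega^{i^T j}\,\hat x_{\Sigma^T(i-a)}\,\omega^{i^T\Sigma q}.
\]
The natural move is the change of variables $\ell = \Sigma^T(i-a) \bmod n$, i.e.\ $i = (\Sigma^T)^{-1}\ell + a \bmod n$; since $\Sigma$ has odd determinant, $\Sigma^T$ is invertible mod $n$, so this is a bijection of $\nsq$ and the sum is reindexed without loss. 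Substituting, the exponent $i^T(j+\Sigma q)$ becomes $\big((\Sigma^T)^{-1}\ell + a\big)^T(j+\Sigma q) = \ell^T \Sigma^{-1}(j+\Sigma q) + a^T(j+\Sigma q)$, using $((\Sigma^T)^{-1})^T = \Sigma^{-1}$. Pulling the $\ell$-independent factor $\omega^{a^T(j+\Sigma q)}$ out of the sum leaves
\[
\F^{-1}(P_{\Sigma,a,q}\hat x)_{j} \;=\; \omega^{a^T(j+\Sigma q)}\cdot\frac{1}{\sqrt N}\sum_{\ell\in\nsq}\omega^{\ell^T(\Sigma^{-1}j + q)}\,\hat x_\ell .
\]

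Now specialize to $j = \pi_{\Sigma,q}(i) = \Sigma(i-q)\bmod n$. Then $\Sigma^{-1}j + q = (i - q) + q = i$, so the remaining sum is exactly $\frac{1}{\sqrt N}\sum_\ell \omega^{\ell^T i}\hat x_\ell = x_i$ by the inverse DFT formula. It remains to check that the prefactor equals $\omega^{a^T\Sigma i}$: with $j = \Sigma(i-q)$ we get $a^T(j + \Sigma q) = a^T\big(\Sigma(i-q) + \Sigma q\big) = a^T\Sigma i$, as desired. Hence $\F^{-1}(P_{\Sigma,a,q}\hat x)_{\pi_{\Sigma,q}(i)} = x_i\,\omega^{a^T\Sigma i}$.

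The only genuinely delicate point is bookkeeping modulo $n$: one must confirm that all the exponents of $\omega$ are well-defined mod $n$ (they are, since $\omega^n = 1$ and every matrix/vector operation above is carried out in $\Z_n$), and that the substitution $\ell = \Sigma^T(i-a)$ is a genuine permutation of $\nsq$, which is where the hypothesis that $\Sigma^{-1}$ exists mod $n$ (equivalently $\det\Sigma$ odd) is used. Everything else is a direct symbol-pushing computation; no estimates or probabilistic arguments are involved. I would present it as the two displays above plus the one-line specialization, which is essentially the whole proof.
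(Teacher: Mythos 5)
Your proposal is correct and is essentially the same computation as the paper's proof: both unfold the definitions of $\F^{-1}$ and $P_{\Sigma,a,q}$, use the invertibility of $\Sigma^T$ mod $n$ to reindex the sum, and match the exponent $a^T\Sigma i$. The only difference is order of operations — you derive the formula for general $j$ and then specialize to $j=\pi_{\Sigma,q}(i)$, whereas the paper substitutes $\pi_{\Sigma,q}(i)$ from the outset — which is immaterial.
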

The proof is similar to the proof of Claim~B.3 in \cite{GHIKPS} and is given in Appendix~\ref{app:A} for completeness. 
Define 
\begin{equation}\label{eq:mu-def}
\begin{split}
\err_k(x)=\min_{k-\text{sparse}~y} ||x-y||_2\text{~~and~~}\mu^2=\err_k^2(x)/k.
\end{split}
\end{equation}
In this paper, we assume knowledge of $\mu$ (a constant factor upper bound on $\mu$ suffices). We also assume that the signal to noise ration is bounded by a polynomial, namely 
that $R^*:=||x||_\infty/\mu\leq n^C$ for a constant $C>0$.
We use the notation $\B^\infty_{r}(x)$ to denote the $\ell_\infty$ ball of radius $r$ around $x$:
$$
\B^\infty_{r}(x)=\{y\in \nsq: ||x-y||_\infty\leq r\},
$$
where $||x-y||_\infty=\max_{s\in d} ||x_s-y_s||_{\circ}$, and $||x_s-y_s||_{\circ}$ is the circular distance on $\mathbb{Z}_n$. 
We will also use the notation $f\lesssim g$ to denote $f=O(g)$.

\section{Filter construction and properties}\label{sec:filter}
For an integer  $b>0$ a power of $2$ let
\begin{equation}\label{eq:boxcar-time}
\hat H_i^1=\left\{
\begin{array}{ll}
\frac{\sqrt{n}}{b-1},&\text{~if~}|i|<b/2\\
0&\text{o.w.}
\end{array}
\right.
\end{equation}

Let $\hat H^{\fc}$ denote the $\fc$-fold convolution of $\hat H^1$ with itself, so that $\text{supp~} \hat H^\fc\subseteq [-\fc \cdot b, \fc\cdot b]$. Here and below $\fc$ is a parameter that we will choose to satisfy $\fc\geq 2d, \fc=\Theta(d)$. The Fourier transform of $\hat H^1$ is the Dirichlet kernel (see e.g.~\cite{book:fourier}, page 37):
\begin{equation*}
\begin{split}
H^{1}_j&=\frac1{b-1}\sum_{|i|< b/2} \omega^{ij}=\frac{\sin(\pi (b-1) j/n)}{(b-1)\sin (\pi j/n)}\text{~for~}j\neq 0\\
H^{1}_0&=1.\\
\end{split}
\end{equation*}
Thus,
$H^{\fc}_j=\left(\frac1{b-1}\sum_{|i|< b/2} \omega^{ij}\right)^{\fc}=\left(\frac{\sin(\pi (b-1) j/n)}{(b-1)\sin (\pi j/n)}\right)^{\fc}$ for $j\neq 0$,
and $H^{\fc}_0=1$. For $i\in \nsq$ let 
\begin{equation}\label{eq:def-g}
G_i=\prod_{s=1}^d H^{\fc}_{i_s},
\end{equation}
so that $\hat G_i=\prod_{s=1}^d \hat H^{\fc}_{i_s}$ and $\supp \hat G\subseteq [-F\cdot b, F\cdot b]^d$. 
 We will use the following simple properties of $G$:
\begin{lemma}\label{lm:filter-prop}
For any  $\fc\geq 1$ one has
\begin{description}
\item[1] $G_0=1$, and $G_j\in [\frac1{(2\pi)^{\fc \cdot d}}, 1]$ for all $j\in \nsq$ such that $||j||_\infty\leq \frac{n}{2b}$;
\item[2] $|G_j|\leq \left(\frac2{1+(b/n)||j||_\infty}\right)^{\fc}$ for all $j\in \nsq$
\end{description}
as long as $b\geq 3$.
\end{lemma}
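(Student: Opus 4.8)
The plan is to reduce everything to the one-dimensional kernel $H^{\fc}_j$ and then take products over the $d$ coordinates using the definition $G_j=\prod_{s=1}^d H^{\fc}_{j_s}$. For part 1, the fact that $G_0=1$ is immediate from $H^{\fc}_0=1$. For the two-sided bound on $G_j$ when $\|j\|_\infty\le \frac{n}{2b}$, I would first establish a one-dimensional statement: for $|j|\le \frac{n}{2b}$ one has $H^{1}_j\in[\tfrac1{2\pi},1]$, say, or more precisely $H^{\fc}_j \in [(2\pi)^{-\fc},1]$. The upper bound $H^{\fc}_j\le 1$ follows since $H^1_j = \frac1{b-1}\sum_{|i|<b/2}\omega^{ij}$ is an average of unit-modulus complex numbers; but actually here $j$ is real and $H^1_j$ is the real Dirichlet kernel value $\frac{\sin(\pi(b-1)j/n)}{(b-1)\sin(\pi j/n)}$, so I should argue it is nonnegative and at most $1$ on this range. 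Nonnegativity and the lower bound come from the regime $|j|\le n/(2b)$: then $\pi(b-1)j/n$ has absolute value at most $\pi(b-1)/(2b) < \pi/2$, so $\sin(\pi(b-1)j/n)$ has the same sign as $j$ and we can lower-bound $|\sin(\pi(b-1)j/n)| \ge \frac{2}{\pi}\cdot\frac{\pi(b-1)|j|}{n}$ using $\sin\theta\ge \frac2\pi\theta$ for $\theta\in[0,\pi/2]$, while in the denominator $(b-1)\sin(\pi j/n)\le (b-1)\cdot\pi|j|/n$ using $\sin\theta\le\theta$. Dividing gives $H^1_j\ge \frac2\pi$ (and one checks the ratio is at most $1$, e.g. since $|\sin(m\theta)|\le m|\sin\theta|$ for integer $m$ applied with $m=b-1$, giving $H^1_j\le 1$). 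Raising to the $\fc$-th power yields $H^{\fc}_j\in[(2/\pi)^{\fc},1]\subseteq[(2\pi)^{-\fc},1]$, and taking the product over $d$ coordinates gives $G_j\in[(2\pi)^{-\fc d},1]$, which is part 1. The only subtlety is making sure the $b-1$ versus $b$ bookkeeping in $\pi(b-1)j/n\le\pi/2$ works under the hypothesis $\|j\|_\infty\le n/(2b)$, which it does since $b-1<b$.

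For part 2, I again work one coordinate at a time: I claim $|H^{\fc}_j|\le \left(\frac{2}{1+(b/n)|j|}\right)^{\fc}$ for all $j$. Writing $H^{\fc}_j=(H^1_j)^{\fc}$ it suffices to show $|H^1_j|\le \frac{2}{1+(b/n)|j|}$. The numerator of the Dirichlet kernel is bounded by $1$ in absolute value, so $|H^1_j|\le \frac{1}{(b-1)|\sin(\pi j/n)|}$; combined with the trivial bound $|H^1_j|\le 1$, we get $|H^1_j|\le\min\{1, \frac{1}{(b-1)|\sin(\pi j/n)|}\}$. Using the standard estimate $|\sin(\pi j/n)|\ge 2|j|/n$ for $j\in[n]$ (valid since $|j|\le n/2$ and $\sin\theta\ge\frac2\pi\theta$ on $[0,\pi/2]$ with $\theta=\pi|j|/n$), we obtain $|H^1_j|\le \min\{1,\frac{n}{2(b-1)|j|}\}\le\min\{1,\frac{n}{b|j|}\}$ for $b\ge 2$. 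It then remains the elementary inequality $\min\{1,1/u\}\le \frac{2}{1+u}$ for $u\ge 0$ (here $u=(b/n)|j|$), which holds because when $u\le 1$ the left side is $1\le\frac2{1+u}$, and when $u\ge1$ the left side is $1/u\le\frac2{1+u}$ since $1+u\le 2u$. Taking the $\fc$-th power and then the product over the $d$ coordinates, and bounding $\prod_s \frac{2}{1+(b/n)|j_s|}\le \frac{2}{1+(b/n)\|j\|_\infty}$ (each factor is at most $1$ except we keep the one achieving the max, and the max coordinate's factor is exactly the claimed bound while the rest are $\le 1$) — actually since $\fc\ge1$ we want $\left(\frac{2}{1+(b/n)\|j\|_\infty}\right)^{\fc}$, so we use that $G_j=\prod_s H^{\fc}_{j_s}$ and just bound all but the maximizing coordinate's factors by $1$ (legitimate since $\frac2{1+u}\le 2$, hmm — need the factors $\le 1$, which requires $(b/n)|j_s|\ge 1$; if not, bound that factor by the trivial $|H^{\fc}_{j_s}|\le1$ instead). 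This gives part 2.

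The main obstacle I anticipate is not any single inequality but the bookkeeping of which elementary bound to apply in which regime — specifically, cleanly combining the "$\le 1$" bound and the "$\le \frac{1}{(b-1)|\sin|}$" bound so that a single clean expression $\left(\frac{2}{1+(b/n)\|j\|_\infty}\right)^{\fc}$ comes out, and making sure the $b-1$ vs.\ $b$ and the restriction $b\ge 3$ are exactly what is needed (the $b\ge3$ presumably ensures $b-1\ge 2$ so that $\frac{n}{2(b-1)|j|}\le\frac{n}{b|j|}$ is valid, and also avoids degeneracies in the kernel). Everything else is a routine application of $\frac2\pi\theta\le\sin\theta\le\theta$ on $[0,\pi/2]$ and $|\sin(m\theta)|\le m|\sin\theta|$.
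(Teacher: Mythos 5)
Your proof follows essentially the same route as the paper's: reduce to the one-dimensional Dirichlet kernel $H^{\fc}$, apply the elementary estimates $\tfrac{2}{\pi}\theta \le \sin\theta \le \theta$ on $[0,\pi/2]$ (plus $|\sin m\theta|\le m|\sin\theta|$ and $|\sin|\le 1$), and pass to $d$ dimensions by bounding all but the coordinate achieving $\|j\|_\infty$ by $1$. Your constants and regime bookkeeping are, if anything, slightly cleaner than the paper's terse chain of inequalities, and your hedge at the end is unnecessary since $|H^{\fc}_{j_s}|\le 1$ always holds, so the proposal is correct.
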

The two properties imply that most of the mass of the filter is concentrated in a square of side $O(n/b)$, approximating the ``ideal'' filter (whose value would be equal to $1$ for entries within the square and equal to $0$ outside of it). The proof of the lemma is similar to the analysis of filters in \cite{HIKP, IKP} and is given in Appendix~\ref{app:A}. 
We will not use the lower bound on $G$ given in the first claim of Lemma~\ref{lm:filter-prop} for our $\tilde O(N)$ time algorithm in this paper. We state the Lemma in full form for later use in~\cite{IK14b}, where we present a sublinear time algorithm.
 
The following property of pseudorandom permutations $\pi_{\Sigma, q}$ makes hashing using our filters effective (i.e. allows us to bound noise in each bucket, see Lemma~\ref{lm:hashing}, see below):
\begin{lemma}\label{lemma:limitedindependence}
Let $i, j\in \nsq$. Let $\Sigma$ be uniformly random with odd determinant. Then for all $t\geq 0$
  \[
\prob[||\Sigma(i-j)||_\infty \leq t] \leq 2(2t/n)^d.
  \]
\end{lemma}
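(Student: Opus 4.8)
The claim is that for fixed nonzero $v = i - j \in [n]^d$ (the case $v = 0$ being trivial since then $\|\Sigma v\|_\infty = 0$ and the bound $2(2t/n)^d \ge 2(2/n)^d \ge$ something only fails for $t < $ half — actually we should treat $v=0$ separately or note the bound is vacuous when $t \ge n/2$), a uniformly random $\Sigma \in \mathcal{M}_{d\times d}$ over $\Z_n$ with odd determinant sends $v$ to a point $\Sigma v$ that is spread out, so the probability it lands in the $\ell_\infty$-ball of radius $t$ is at most $2(2t/n)^d$. The natural approach is: (1) reduce to understanding the distribution of the single vector $w := \Sigma v \bmod n$ when $\Sigma$ is drawn from this group; (2) show this distribution is close to uniform on $[n]^d$ — specifically that $\prob[\Sigma v = w]$ is bounded by roughly $2/n^d$ for every $w$, or more precisely that the distribution is uniform over an appropriate large subset; (3) count lattice points in the ball, which contributes the factor $(2t/n+1/n)^d$-ish, absorbed into $(2t/n)^d$ up to the constant $2$ (handling $t < 1$, where the ball is just a point, separately).

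The plan is as follows. First I would reduce mod $2$: write $n = 2^\ell$ and let $v$ have 2-adic valuation $\nu$ in the sense that $v = 2^a v'$ where $a$ is the minimum valuation among the coordinates of $v$, so that $v'$ has at least one odd coordinate. A matrix $\Sigma$ with odd determinant is invertible over $\Z_n$ (and over $\Z_2$), so by a change of basis I may assume $v' = e_1$, the first standard basis vector; then $\Sigma v = 2^a (\Sigma e_1) = 2^a c_1$, where $c_1$ is the first column of $\Sigma$. So the question becomes: what is the distribution of the first column of a uniformly random odd-determinant matrix, scaled by $2^a$? The key structural fact is that the first column $c_1$ of $\Sigma$ can be \emph{any} vector in $[n]^d$ that is ``primitive mod $2$'' — i.e. has at least one odd entry — and moreover, by symmetry of the group action, $c_1$ is \emph{uniformly distributed} over the set $P$ of such primitive vectors. (This follows because right-multiplication by permutation/elementary matrices, or a direct counting of how many odd-determinant matrices have a given valid first column, gives the same count for each primitive $c_1$: extend $c_1$ to a basis of $\Z_2^d$ and count completions, which is independent of $c_1$.) Since $|P| = n^d - (n/2)^d = n^d(1 - 2^{-d}) \ge n^d/2$, we get $\prob[c_1 = w] \le \frac{1}{|P|} \le \frac{2}{n^d}$ for each $w \in P$, and $0$ otherwise.

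Next I would bound the number of target points. We want $\prob[\|2^a c_1\|_\infty \le t]$. If $t < 2^a$, this event requires $2^a c_1 \equiv 0$, impossible for primitive $c_1$ unless — wait, $2^a c_1$ has first coordinate $2^a c_{1,1}$ with $c_{1,1}$ odd, so $\|2^a c_1\|_\infty \ge 2^a$ in circular distance; hence the probability is $0 < 2(2t/n)^d$ trivially in that regime (assuming $t \ge 0$; if $t = 0$ and $a \ge 1$ it's $0$; if $a = 0$ it's $\prob[c_1 = 0] = 0$ too). Otherwise $t \ge 2^a$, and $\{y : \|2^a y\|_\infty \le t \text{ in } \Z_n\}$ has at most $(2\lfloor t/2^a\rfloor + 1)^d \le (2t/2^a + 1)^d$ points (per coordinate, $y_s$ ranges over an arithmetic-progression-like set of that size). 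Combining, $\prob[\|\Sigma v\|_\infty \le t] \le \frac{2}{n^d}\cdot (2t/2^a+1)^d$. Since $2^a \ge 1$ and we may assume $t \ge 1$ (else handle directly), $2t/2^a + 1 \le 2t/2^a + 2t/2^a \cdot (2^a/(2t)) $ — cleaner: for $t \ge 2^a$ we have $1 \le t/2^a$ so $2t/2^a + 1 \le 3t/2^a \le 4t/2^a$; actually to land exactly on $2(2t/n)^d$ I'd use $2t/2^a + 1 \le 2\cdot 2t/2^a \cdot \frac{2^a}{n} \cdot \frac{n}{2^a}$... let me just aim for the stated constant: $(2t/2^a + 1)^d \le 2^{d-1}((2t/2^a)^d + 1) $ is the wrong direction. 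The honest route: when $t \ge 2^{a-1}$ we get $2t/2^a + 1 \le 4t/2^a$, so the bound is $\frac{2}{n^d}(4t/2^a)^d \le \frac{2}{n^d}(4t)^d$ — hmm that overshoots. I think the intended bound is cleanest if one observes $2^a \le n$ always and actually the relevant scale makes $\Sigma v$ live on the sublattice $2^a\Z_n$, giving directly that the number of points within distance $t$ is at most $\lceil (2t+1)/2^a \rceil^d \le (2t/2^a + 1)^d$ and then $\prob \le \frac{2}{n^d}\min(n^d/2^{ad}, (2t/2^a+1)^d)$. To conclude $\le 2(2t/n)^d$ it suffices that $(2t/2^a+1) \le (2t/n)\cdot(n/2^a) $ hmm that's equality-ish up to the $+1$. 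Since $2^a \mid$ everything, $2t/2^a+1 = (2t+2^a)/2^a \le 2\cdot 2t /2^a$ whenever $2^a \le 2t$, i.e. $(2t/2^a+1)^d \le 2^d (2t)^d/2^{ad}$, giving $\prob \le \frac{2\cdot 2^d (2t)^d}{n^d 2^{ad}}$ — still has an extra $2^d/2^{ad} \ge 2^d$.

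The main obstacle, as the scratch work above shows, is pinning the constant exactly: the structural part (first column uniform over primitive vectors, hence density $\le 2/n^d$) and the lattice-point count are both routine, but getting from $\frac{2}{n^d}\cdot(\text{point count})$ down to precisely $2(2t/n)^d$ requires being careful that the scaling factor $2^a$ works \emph{for} us (the image lands on a coarse sublattice, so there are \emph{fewer} points) rather than against us in the constant. I expect the clean statement is: the number of points of $[n]^d$ within circular $\ell_\infty$-distance $t$ of the sublattice $2^a \Z_n^d$ that are themselves of the form $2^a c$ with $c$ primitive is at most $(2t/n)^d \cdot |P| \cdot (\text{const})$ — essentially because $2^a\Z_n$ has exactly $n/2^a$ points on the circle, of which at most $\min(n/2^a,\ 2\lfloor t/2^a\rfloor+1)$ lie within distance $t$, and this is $\le (2t/2^a)\cdot\frac{1}{1}$... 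I would restructure the final inequality to compare against $2(2t/n)^d$ by splitting on whether $t \ge n$ (bound is $\ge 2$, trivial since probability $\le 1$) versus $t < n$, and within $t<n$ use $2\lfloor t/2^a \rfloor + 1 \le 2t/2^a + 1 \le 2 \cdot \max(2t/2^a, 1)$, noting that when $2t/2^a < 1$ the event is empty. That disposes of the constant cleanly. Everything else — the base change using invertibility over $\Z_2$, and the uniformity of the first column — is the kind of limited-independence / Gaussian-elimination-over-$\Z_{2^\ell}$ argument that is standard and which I would state as a sublemma and verify by counting matrix completions.
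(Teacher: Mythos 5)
Your proof takes a genuinely different route from the paper's. The paper relaxes the odd-determinant constraint to a larger, easier-to-analyze set $\mathcal{D}'$ of matrices satisfying a single linear mod-$2$ condition on the columns indexed by the odd coordinates of $i-j$, establishes that $\Sigma(i-j)$ is \emph{exactly} uniform on a scaled copy of $\nsq\setminus 2\nsq$ when $\Sigma$ is uniform over $\mathcal{D}'$, and then transfers the bound back to $\gl$ by conditioning, using $\prob_{\Sigma\sim\mathcal{D}'}[\Sigma\in\gl]\geq 2^{-d}$ (a polynomial-over-$\Z_2$ / Reed--Muller estimate), which costs an explicit $2^d$. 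You instead stay inside the group $\gl$: change basis so that $i-j=2^a e_1$, observe that $\Sigma e_1$ is the first column of a uniformly random $\Sigma\in\gl$, and use translation invariance of the uniform measure on $\gl$ to conclude that this column is uniform over the set $P$ of primitive vectors (those with at least one odd coordinate), so every $w\in P$ has probability exactly $1/|P|\leq 2/n^d$. Both the change-of-basis reduction and the uniform-over-$P$ sublemma are correct, and this route is conceptually cleaner and avoids the explicit $2^d$ conditioning loss.

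There is, however, a concrete counting error in the final step. The set $\{y\in\nsq : \|2^a y\|_\infty\leq t\}$ has $2^{ad}\bigl(2\lfloor t/2^a\rfloor+1\bigr)^d$ elements, not $\bigl(2\lfloor t/2^a\rfloor+1\bigr)^d$: the latter counts the targets $w=2^a y$ in the sublattice $2^a\nsq$ lying within distance $t$, but each such $w$ has $2^a$ preimages per coordinate under multiplication by $2^a$ on $\Z_n$ with $n=2^\ell$. (Sanity check: at $t=n/2$ your formula would give $\tfrac{2}{n^d}(n/2^a+1)^d$, far below $1$ for large $a$, yet the probability must equal $1$.) Restoring the missing factor gives $\prob\leq \tfrac{2}{n^d}\bigl(2^a(2\lfloor t/2^a\rfloor+1)\bigr)^d\leq \tfrac{2}{n^d}(2t+2^a)^d$, which is $0$ for $t<2^a$ (as you observed, the image sits on the odd part of the sublattice) and at most $2(4t/n)^d$ for $t\geq 2^a$. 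This overshoots the stated constant $2(2t/n)^d$ by $2^d$; the mismatch is real but not fatal, since the paper's own intermediate step is equally loose (it silently replaces a per-dimension point count of $2t+1$ by $t$), and downstream the lemma is invoked only with $C^d$ slack (Lemmas~\ref{lm:hashing} and~\ref{lm:isolated-pi}). So once the $2^{ad}$ factor is restored, your argument proves the bound in the form $C^d(t/n)^d$, which is what the rest of the paper actually uses.
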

A somewhat incomplete proof of this lemma for the case $d=2$ appeared as Lemma B.4 in  \cite{GHIKPS}. We give a full proof for arbitrary $d$ in Appendix~\ref{app:A}.

We access the signal $x$ via random samples of $\hat x$, namely by computing the signal $\F^{-1}((P_{\Sigma, a, q}\hat x)\cdot \hat G)$. As Lemma~\ref{lm:hashing} below shows, this effectively ``hashes''  $x$ into $B=b^d$ bins by convolving it with the filter $G$ constructed above. Since our algorithm runs in $\tilde O(N)$ as opposed to $\tilde O(k)$ time, we can afford to work with bins around any location in time domain (we will be interested in locations of heavy hitters after applying the permutation, see Lemma~\ref{lm:hashing}).  This improves the dependence of our sample complexity on $d$. The properties of the filtering process are summarized in 
\begin{lemma}\label{lm:hashing}
Let $x\in \C^N$. Choose $\Sigma\in \gl, a, q\in \nsq$  uniformly at random, independent of $x$.   Let 
$$
u =\sqrt{N}\F^{-1}((P_{\Sigma, a, q}\hat x)\cdot \hat G),
$$
where $G$ is the filter constructed in ~\eqref{eq:def-g}. Let $\pi=\pi_{\Sigma, q}$.

For  $i\in \nsq$ let
$\mu^2_{\Sigma, q}(i)=\sum_{j \in \nsq\setminus \{i\}} \abs{x_j G_{o_i(j)}}^2$, where $o_i(j)= \pi(j) - \pi(i)$ as before.
Suppose that $\fc\geq 2d$. Then for any $i\in \nsq$
\begin{enumerate}
  \item $\expect_{\Sigma, q}[\mu^2_{\Sigma, q}(i)] \leq  C^d \norm{2}{x}^2/B$ for a constant $C>0$.
  \item for any $\Sigma, q$ one has 
  $  \expect_{a}[\abs{\omega^{-a^T\Sigma i}u_{\pi(i)} - x_i}^2]\lesssim \mu^2_{\Sigma, q}(i)+\delta ||x||_2^2$,
  where the last term corresponds to the numerical error incurred from computing FFT with $O(\log 1/\delta)$ bits of machine precision.
\end{enumerate}
\end{lemma}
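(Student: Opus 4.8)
The plan is to prove the two claims separately. Claim~1 is a second-moment estimate over the random matrix $\Sigma$ that combines the filter tail bound of Lemma~\ref{lm:filter-prop}(2) with the limited-independence bound of Lemma~\ref{lemma:limitedindependence}. Claim~2 is an \emph{exact} second-moment computation over the random shift $a$ (this is where $G_0=1$ and the evenness of $G$ enter), followed by a routine roundoff accounting for the FFT.

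\textbf{Claim 1.} Writing $o_i(j)=\pi_{\Sigma,q}(j)-\pi_{\Sigma,q}(i)=\Sigma(j-i)\bmod n$, which does not depend on $q$, we have $\mu^2_{\Sigma,q}(i)=\sum_{j\neq i}\abs{x_j}^2\abs{G_{\Sigma(j-i)}}^2$, so by linearity of expectation it suffices to show $\expect_\Sigma[\abs{G_{\Sigma\Delta}}^2]\leq C^d/B$ for every fixed $\Delta\neq 0$ and then sum against $\abs{x_j}^2$, using $\sum_{j\neq i}\abs{x_j}^2\le\norm{2}{x}^2$. Since $\Sigma$ is invertible modulo $n$, $T:=\norm{\infty}{\Sigma\Delta}\in\{1,\dots,n/2\}$, and I would split according to the size of $T$. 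For $T\leq n/(2b)$, Lemma~\ref{lm:filter-prop}(2) gives the crude bound $\abs{G_{\Sigma\Delta}}^2\leq 2^{2\fc}$ while Lemma~\ref{lemma:limitedindependence} gives $\prob[T\leq n/(2b)]\leq 2(1/b)^d=2/B$, so this range contributes $O(2^{2\fc}/B)$. For the dyadic shells $T\in(2^{l-1}n/(2b),\,2^{l}n/(2b)]$ with $l\geq 1$, Lemma~\ref{lm:filter-prop}(2) gives $\abs{G_{\Sigma\Delta}}^2\leq(2n/(bT))^{2\fc}\leq 2^{(3-l)2\fc}$ and Lemma~\ref{lemma:limitedindependence} gives $\prob[T\leq 2^{l}n/(2b)]\leq 2(2^{l}/b)^d=2\cdot 2^{ld}/B$, so the $l$-th shell contributes $O\!\left(2^{6\fc}\,2^{l(d-2\fc)}/B\right)$. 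Because $\fc\geq 2d$ the exponent $d-2\fc\leq -3d<0$, so the geometric series over $l$ converges to an $O(1)$ factor; summing everything gives $\expect_\Sigma[\abs{G_{\Sigma\Delta}}^2]=2^{O(\fc)}/B$, and since $\fc=\Theta(d)$ this is $C^d/B$ with $C=O(1)$. Multiplying by $\norm{2}{x}^2$ finishes Claim~1.

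\textbf{Claim 2.} Let $y=\F^{-1}(P_{\Sigma,a,q}\hat x)$, so by Lemma~\ref{lm:perm} $y_{\pi(j)}=x_j\omega^{a^T\Sigma j}$ for all $j$. With the orthonormal DFT one has $\widehat{f\ast g}=\sqrt N\,\hat f\hat g$ for cyclic convolution $(f\ast g)_l=\sum_m f_m g_{l-m}$, hence $u=\sqrt N\,\F^{-1}(\hat y\cdot\hat G)=y\ast G$. Re-indexing the convolution by the bijection $\pi$ and using that $G$ is even (so $G_{\pi(i)-\pi(j)}=G_{o_i(j)}$) gives
\[
u_{\pi(i)}=\sum_{j\in\nsq}y_{\pi(j)}G_{\pi(i)-\pi(j)}=\sum_{j\in\nsq}x_j\omega^{a^T\Sigma j}G_{o_i(j)}.
\]
Since $G_0=1$, the $j=i$ term is exactly $x_i\omega^{a^T\Sigma i}$, so $\omega^{-a^T\Sigma i}u_{\pi(i)}-x_i=\sum_{j\neq i}x_j\omega^{a^T\Sigma(j-i)}G_{o_i(j)}$. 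Expanding the squared modulus and taking $\expect_a$ with $a$ uniform in $\nsq$, the $(j,j')$ cross term carries the factor $\expect_a[\omega^{a^T\Sigma(j-j')}]=\prod_{s=1}^d\expect_{a_s}[\omega^{a_s(\Sigma(j-j'))_s}]$, which is $1$ if $\Sigma(j-j')\equiv 0\pmod n$ and $0$ otherwise; since $\Sigma$ is invertible modulo $n$ this equals $\one[j=j']$. Only the diagonal survives, giving
\[
\expect_a\!\left[\abs{\omega^{-a^T\Sigma i}u_{\pi(i)}-x_i}^2\right]=\sum_{j\neq i}\abs{x_j}^2\abs{G_{o_i(j)}}^2=\mu^2_{\Sigma,q}(i).
\]
Finally, the algorithm computes $u$ through an FFT carried out with $O(\log(1/\delta))$ bits of precision; by the standard roundoff analysis of the radix-$2$ FFT the computed value of $u_{\pi(i)}$ differs from the exact one by a perturbation of squared magnitude $O(\delta\norm{2}{x}^2)$, which the triangle inequality absorbs into the $\delta\norm{2}{x}^2$ term (up to the constant hidden in $\lesssim$), yielding the claim.

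\textbf{Main obstacle.} The only genuine work is in Claim~1: organizing the dyadic bookkeeping so that the $\poly(n)$ factors coming from the filter tail estimate $\abs{G}^2\lesssim(n/(bT))^{2\fc}$ cancel against the $(T/n)^d$-type collision probability, leaving the clean bound $C^d/B$ rather than something that grows polynomially in $n$. This is precisely where both hypotheses are used: $\fc\geq 2d$ makes the geometric series over scales converge (with room to spare), and $\fc=\Theta(d)$ keeps the resulting constant of the form $C^d$. Claim~2 is essentially bookkeeping once the convolution identity, $G_0=1$, and character orthogonality are in place.
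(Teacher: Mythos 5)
Your proof is correct and follows essentially the same route as the paper's: for the first claim, the same dyadic-shell decomposition pairing the filter decay of Lemma~\ref{lm:filter-prop}(2) against the collision probability of Lemma~\ref{lemma:limitedindependence} (with $\fc\geq 2d$ making the geometric series converge), and for the second claim the same second-moment computation over $a$, which you carry out via explicit character orthogonality where the paper invokes Parseval on the vector $v_{\Sigma j}=x_jG_{o_i(j)}$ — an identical calculation in different clothing.
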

The proof of Lemma~\ref{lm:hashing} is given in Appendix~\ref{app:A}.
\begin{remark}
We assume throughout the paper that arithmetic operations are performed on $C\log N$ bit numbers for a sufficiently large constant $C>0$ such that $\delta ||x||_2^2\leq \delta (R^*)^2n\mu^2\leq \mu^2/N$, so that the effect of rounding errors on Lemma~\ref{lm:hashing} is negligible.
\end{remark}

\section{The algorithm}\label{sec:algo-linear}
In this section we present our $\tilde O(N)$ time algorithm that achieves $d^{O(d)}\frac1{\e}k\log N$ sample complexity and give the main definitions required for its analysis.
Our algorithm follows the natural iterative recovery scheme. The main body of the algorithm (Algorithm~\ref{alg:sfft}) takes samples of the signal $\wh{x}$ and repeatedly calls the \textsc{LocateAndEstimate} function (Algorithm~\ref{alg:loc-est}), improving estimates of the values of dominant elements of $x$ over $O(\log n)$ iterations. Crucially, samples of $\wh{x}$ are only taken at the beginning of Algorithm~\ref{alg:sfft} and passed to each invocation of \textsc{LocateAndEstimate}. Each invocation of \textsc{LocateAndEstimate} takes samples of $\wh{x}$ as well as the current approximation $\chi$ to $x$ as input, and outputs a constant factor approximation to dominant elements of $x-\chi$ (see section~\ref{sec:analysis} for analysis of \textsc{LocateAndEstimate}).

\begin{algorithm}[H]
\caption{Overall algorithm: perform Sparse Fourier Transform}
\label{alg:sfft} 
\begin{algorithmic}[1] 
\Procedure{SparseFFT}{$\hat x, k, \e, R^*, \mu$}\Comment{$R^*$ is a bound on $||x||_\infty/(\sqrt{\e}\mu)$}
\State $\chi^{(0)} \gets 0$ \Comment{in $\C^n$}.\Comment{$\mu$ is the noise level (defined in \eqref{eq:mu-def})}
\State $T \gets \log_2 R^*$
\State $B\leftarrow k/(\e\alpha^{d})$ \Comment{Choose $\alpha$ so that $B=b^d$ for $b$ a power of $2$}
\State $G, \wh{G}\gets$ filter as in \eqref{eq:def-g}
\State $r_{max}\gets \Theta(\log N)$
\For{$r=0$ to $r_{max}$}
\State Choose $\Sigma_r\in \gl, a_r, q_r\in \nsq$ uniformly at random
\State For $r=1,\ldots, r_{max}$, $u^{r}\gets \sqrt{N}\F^{-1}((P_{\Sigma, a, q}\hat x)\cdot \hat G)$
\State $\rhd${~Note that $u^r\in \C^{\nsq}$ for all $r$}
\EndFor
\For{$t = 0, 1, \dotsc,T-1$}
\State $\chi' \gets \Call{LocateAndEstimate}{\hat x,  \chi^{(t)}, \{(\Sigma_r, a_r, b_r), u_r\}_{r=1}^{r_{max}}, r_{max}, \wh{G}, 4\sqrt{\e}\mu 2^{T-(t+1)}}$
\State $\chi^{(t+1)} \gets \chi^{(t)} + \chi'$
\EndFor
\State \textbf{return} $\chi^{(T)}$
\EndProcedure 
\end{algorithmic}
\end{algorithm}

\begin{algorithm}
\caption{\textsc{LocateAndEstimate}($\hat x, \chi, \{(\Sigma_r, a_r, q_r), u_r\}_{r=1}^{r_{max}}, r_{max}, \wh{G}, \nu$)}\label{alg:loc-est} 
\begin{algorithmic}[1] 
\Procedure{LocateAndEstimate}{$\hat x, \chi, \{(\Sigma_r, a_r, q_r), u_r\}_{r=1}^{r_{max}}, r_{max}, \wh{G}, \nu$}
\State {\bf Requires} that $||x-\chi||_\infty\leq 2\nu$
\State {\bf Guarantees} that $||x-\chi-\chi'||_\infty\leq \nu$
\State $L\gets \emptyset$
\State $w\gets 0$
\For {$r=0$ to $r_{max}$}
\State $v^{r}\gets u^r - \sqrt{N}\F^{-1}((P_{\Sigma, a, q}\hat \chi)\cdot \hat G)$\Comment{Update signal: note this does not use any new samples}
\EndFor
\For{$f\in \nsq$}
\State $S\gets \emptyset$
\For {$r=0$ to $r_{max}$}
\State Denote permutation $\pi_{\Sigma_r, q_r}$ by $\pi$
\State $S\gets S\cup \{v^r_{\pi(f)}\cdot \omega^{-a^T\Sigma f} \}$
\EndFor
\State $\eta\gets \text{median}(S)$ \Comment{Take the median coordinatewise}
\State {\bf If ~}{$|\eta|\leq \nu/2$}{~\bf then continue} \Comment{Continue if the estimated value is too small}
\State $L\gets L\cup \{f\}$
\State $ w_f\gets \eta$
\EndFor
\State \textbf{return} $ w$
\EndProcedure 
\end{algorithmic}
\end{algorithm}

We first give intuition behind the algorithm and the analysis. We define the set $S\subseteq \nsq$ to contain elements $i\in \nsq$ such that $|x_i|^2\geq \e \mu^2$ (i.e. $S$ is the set of {\em head elements} of $x$). As we show later (see section~\ref{sec:analysis}) it is sufficient to locate and estimate all elements in $S$ up to $O(\e \mu^2)$ error term in order obtain $\ell_2/\ell_2$ guarantees that we need\footnote{In fact, one can see that our algorithm gives the stronger $\ell_\infty/\ell_2$ guarantee}. Algorithm~\ref{alg:sfft} performs $O(\log N)$ rounds of location and estimation, where in each round the located elements are estimated up to a constant factor. The crucial fact that allows us to obtain an optimal sampling bound is that the algorithm uses the same samples during these $O(\log N)$ rounds. 
Thus, our main goal  is to show that elements of $S$ will be successfully located and estimated throughout the process, {\em despite the dependencies} between the sampling pattern and the residual signal $x-\chi^{(t)}$ that arise  due to reuse of randomness in the main loop of Algorithm~\ref{alg:sfft}. 

We now give an overview of the main ideas that allow us to circumvent lack of independence. Recall that our algorithm needs to estimate all {\em head elements}, i.e. elements $i\in S$, up to $O(\e \mu^2)$ additive error. Fix an element $i\in S$ and for each permutation $\pi$ consider balls $\B^\infty_{\pi(i)}((n/b)\cdot 2^{t+2})$ around the position that $i$ occupies in the permuted signal. For simplicity, we assume that $d=1$, in which case the balls $\B^\infty_{\pi(i)}((n/b)\cdot 2^{t+2})$ are just intervals:
\begin{equation}\label{eq:ball-1d}
\B^\infty_{\pi(i)}((n/b)\cdot 2^{t+2})=\pi(i)+[-(n/b)\cdot 2^{t+2}, +(n/b)\cdot 2^{t+2}],
\end{equation}
where addition is modulo $n$. Since our filtering scheme is essentially ``hashing'' elements of $x$ into $B=\Omega(|S|/\alpha)$ ``buckets'' for a small constant $\alpha>0$, we expect at most $O(\alpha)2^{t+2}$ elements of $S$ to land in a ball \eqref{eq:ball-1d} (i.e. the expected number of elements that land in this ball is proportional to its volume). 

First suppose that this expected case occurs for any permutation, and assume that all head elements (elements of $S$) have the same magnitude (equal to $1$ to simplify notation).  It is now easy to see that the number of elements of $S$ that are mapped to \eqref{eq:ball-1d} {\em for any $t\geq 0$} does not exceed its expectation (we call element $i$ ``isolated'' with respect to $\pi$ {\em at scale $t$} in that case), then the contribution of $S$  to $i$'s estimation error is $O(\alpha)$.  Indeed, recall that the contribution of an element $j\in \nsq$ to the estimation error of $i$ is about $(1+(b/n)|\pi(i)-\pi(j)|)^{-F}$ by Lemma~\ref{lm:filter-prop}, (2), where we can choose $F$ to be any constant without affecting the asymptotic sample complexity.  Thus, even if $F=2$, corresponding to the boxcar filter, the contribution to $i$'s estimation error is bounded by 
\begin{equation*}
\begin{split}
&\sum_{t\geq 0, (n/b)\cdot 2^{t+2}<n/2} \left|\pi(S)\cap  \B^\infty_{\pi(i)}((n/b)\cdot 2^{t+2})\right|\cdot \max_{y\in \B^\infty_{\pi(i)}((n/b)\cdot 2^{t+2})\setminus \B^\infty_{\pi(i)}((n/b)\cdot 2^{t+1})} |G_{\pi(i)-y}|\\
&=\sum_{t\geq 0, (n/b)\cdot 2^{t+2}<n/2} O(\alpha 2^{t+2})\cdot (1+2^{t+1})^{-F}=O(\alpha).\\
\end{split}
\end{equation*}
Thus, if not too many elements of $S$ land in intervals around $\pi(i)$, then the error in estimating $i$ is at most $O(\alpha)$ times the maximum head element in the current residual signal (plus noise, which can be handled separately). This means that the median in line 15 of Algorithm~\ref{alg:loc-est} is an additive $\pm O(\alpha) ||x-\chi||_\infty$ approximation to element $f$. Since Algorithm~\ref{alg:loc-est} only updates elements that pass the magnitude test in line 16, we can conclude that whenever we update an element, we have a $(1\pm O(\alpha))$ multiplicative estimate of its value, which is sufficient to conclude that we decrease the $\ell_\infty$ norm of $x-\chi$ in each iteration. Finally, we crucially ensure that the signal is never updated outside of the set $S$. This means that the set of head elements is fixed in advance and does not depend on the execution path of the algorithm! This allows us to formulate a notion of isolation with respect to the set $S$ of head elements fixed in advance, and hence avoid issues arising from the lack of independence of the signal $x-\chi$ and the permutaions we choose. 

 We formalize this notion in Definition~\ref{def:isolated}, where we define what it means for $i\in S$ to be isolated under $\pi$.  Note that the definition is essentially the same as asking that the balls in \eqref{eq:ball-1d} do not contain more than the expected number of elements of $S$. However, we need to relax the condition somewhat in order to argue that it is satisfied with good enough probability {\em simultaneously for all $t\geq 0$}. A adverse effect of this relaxation is that our bound on the number of elements of $S$ that are mapped to a balls around $\pi(i)$ are weaker than what one would have in expectation. This, however, is easily countered by choosing a filter with stronger, but still polynomial, decay (i.e. setting the parameter $F$ in the definiion of our filter $G$ in \eqref{eq:def-g} sufficiently large).
 
 As noted before, the definition of being isolated crucially only depends  on the {\bf locations} of heavy hitters as opposed to their {\bf values}. This allows us to avoid an (intractable) union bound over all signals that appear during the execution of our algorithm. We give formal definitions of isolationin section~\ref{sec:isolated}, and then use them to analyze the algorithm in section~\ref{sec:analysis}.

\section{Isolated elements and main technical lemmas}\label{sec:isolated}
 We now give the technical details for the outline above.

\newcommand{\defsk}{Let $S\subseteq \nsq, |S|\leq 2k/\e,$ be such that $||x_{\nsq\setminus S}||_\infty\leq \mu$.  ~Let $B\geq k/(\e \alpha^d)$.~}
\newcommand{\defskmin}{Let $S\subseteq \nsq, |S|\leq 2k/\e$.  ~Let $B\geq k/(\e \alpha^d)$.~}

\begin{definition}
For a permutation $\pi$ and a set $S\subseteq \nsq$ we denote $S^\pi:=\{\pi(x): x\in S\}$.
\end{definition}

\begin{definition}\label{def:isolated}
Let $\Sigma\in \gl, q\in \nsq$, and let $\pi=\pi_{\Sigma, q}$. We say that an element $i$ is {\em isolated} under permutation $\pi$ {\em at scale $t$} if
$$
|(S\setminus \{i\})^\pi\cap \B^\infty_{\pi(i)}((n/b)\cdot 2^{t+2})|\leq \alpha^{d/2} 2^{(t+3)d}\cdot 2^{t}.
$$
We say that $i$ is simply {\em isolated} under permutation $\pi_{\Sigma, q}$ if it is isolated under $\pi_{\Sigma, q}$ at all scales $t\geq 0$.
\end{definition}
\begin{remark}
We will use the definition of isolated elements for a set $S$ with  $|S|\approx k/\e$.
\end{remark}

The following lemma shows that every $i\in \nsq$ is likely to be isolated under a randomly chosen permutation $\pi$:
\begin{lemma}\label{lm:isolated-pi}
\defskmin  Let $\Sigma\in \gl, q\in \nsq$ be chosen uniformly at random, and let $\pi=\pi_{\Sigma, q}$. Then each $i\in \nsq$ is {\em isolated} under permutation $\pi$ with probability at least $1-O(\alpha^{d/2})$.
\end{lemma}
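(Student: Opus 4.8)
The plan is to bound, for each scale $t \geq 0$, the probability that $i$ fails to be isolated at scale $t$, and then sum these probabilities over all $t$ in the relevant range (namely $(n/b)\cdot 2^{t+2} < n$, i.e. $t = O(\log(n/b))$) using a union bound. For a fixed scale $t$, the quantity of interest is $|(S\setminus\{i\})^\pi \cap \B^\infty_{\pi(i)}((n/b)\cdot 2^{t+2})|$. For each $j \in S\setminus\{i\}$, the event that $\pi(j) \in \B^\infty_{\pi(i)}((n/b)\cdot 2^{t+2})$ is exactly the event $\|\Sigma(i-j)\|_\infty \leq (n/b)\cdot 2^{t+2}$ — note the $q$-shift cancels since $\pi(i)-\pi(j) = \Sigma(i-j) \bmod n$. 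By Lemma~\ref{lemma:limitedindependence} with $\tau = (n/b)\cdot 2^{t+2}$, this probability is at most $2(2\tau/n)^d = 2(2^{t+3}/b)^d$. So first I would write the expected count as
\[
\expect\left[|(S\setminus\{i\})^\pi \cap \B^\infty_{\pi(i)}((n/b)\cdot 2^{t+2})|\right] \leq |S| \cdot 2(2^{t+3}/b)^d \leq (2k/\e)\cdot 2 \cdot 2^{(t+3)d}/b^d.
\]
Using $B = b^d \geq k/(\e\alpha^d)$, i.e. $k/(\e b^d) \leq \alpha^d$, this bound becomes at most $4\alpha^d 2^{(t+3)d}$.

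Next I would apply Markov's inequality. We want to conclude the count exceeds $\alpha^{d/2} 2^{(t+3)d}\cdot 2^t$ with probability at most something summable in $t$. Since the expectation is at most $4\alpha^d 2^{(t+3)d}$ and the threshold is $\alpha^{d/2} 2^{(t+3)d} 2^t$, Markov gives failure probability at most
\[
\frac{4\alpha^d 2^{(t+3)d}}{\alpha^{d/2} 2^{(t+3)d} 2^t} = 4\alpha^{d/2} 2^{-t}.
\]
Summing over $t \geq 0$ yields $\sum_{t\geq 0} 4\alpha^{d/2} 2^{-t} = 8\alpha^{d/2} = O(\alpha^{d/2})$, so $i$ is isolated at all scales simultaneously with probability at least $1 - O(\alpha^{d/2})$, which is exactly the claim. (I'd double check the boundary: for scales where $(n/b)2^{t+2} \geq n/2$ the ball is all of $\nsq$ and the count is trivially $\leq |S| \leq 2k/\e$; one should verify this is below the threshold $\alpha^{d/2}2^{(t+3)d}2^t$ for such large $t$, which holds comfortably since $2^{(t+3)d}2^t \gg k/\e$ when $2^{td} \gtrsim N$.)

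The one genuinely delicate point — and the reason the threshold in Definition~\ref{def:isolated} has the slightly odd form $\alpha^{d/2} 2^{(t+3)d}\cdot 2^t$ rather than the naive $O(\alpha^d 2^{td})$ one would guess from the expectation — is precisely this need for the failure probabilities to be summable over all $t$ with a total that is $O(\alpha^{d/2})$ and not merely $O(\alpha^{d/2}\log(n/b))$ or $O(\alpha^d)$. The extra $2^t$ slack in the threshold (beyond the volume growth $2^{td}$) buys the geometric decay $2^{-t}$ in the Markov bound, and the choice of $\alpha^{d/2}$ rather than $\alpha^d$ is what makes each term's numerator $\alpha^{d/2}$ after dividing the $\alpha^d$ expectation by the $\alpha^{d/2}$ in the threshold. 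So the main ``obstacle'' is not any hard estimate but rather getting the bookkeeping of exponents in $\alpha$, $2^t$, and $2^{td}$ to line up so the union bound closes; the underlying probabilistic input is just Lemma~\ref{lemma:limitedindependence} plus Markov. I would also make sure to invoke Lemma~\ref{lemma:limitedindependence} only in the regime $t \geq 0$ where it is nontrivial and handle any $t$ with $2t/n \geq 1$ (trivial bound) separately as noted above.
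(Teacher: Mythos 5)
Your proof is correct and follows essentially the same route as the paper: bound the expected number of elements of $S$ landing in each ball via Lemma~\ref{lemma:limitedindependence}, apply Markov's inequality at each scale $t$ against the threshold $\alpha^{d/2}2^{(t+3)d}2^t$, and union-bound over $t\geq 0$ using the geometric decay $2^{-t}$. The extra $2^t$ factor and the $\alpha^{d/2}$ (rather than $\alpha^d$) in the threshold play exactly the role you identify, and your boundary remark about large $t$ is harmless but not needed since the Markov bound remains valid (if vacuous) there.
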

\begin{proof}
  By Lemma~\ref{lemma:limitedindependence}, for any fixed $i$, $j\neq i$ and any radius $r\geq 0$,
\begin{equation}\label{eq:li}
 \prob_{\Sigma}[\norm{\infty}{\Sigma(i-j)} \leq r] \leq 2(2r/n)^d.
\end{equation}
Setting $r=(n/b)\cdot 2^{t+2}$, we get
\begin{equation}\label{eq:lixchbs}
\begin{split}
\expect_{\Sigma, q}[|(S\setminus \{i\})^\pi\cap \B^\infty_{\pi(i)}((n/b)\cdot 2^{t+2})|]&=\sum_{j\in S\setminus \{i\}} \prob_{\Sigma, q}[\pi(j)\in  \B^\infty_{\pi(i)}((n/b)\cdot 2^{t+2})]\\
\end{split}
\end{equation}
Since $\pi_{\Sigma, q}(i)=\Sigma(i-q)$ for all $i\in \nsq$, we have
\begin{equation*}
\begin{split}
\prob_{\Sigma, q}[\pi(j)\in  \B^\infty_{\pi(i)}((n/b)\cdot 2^{t+2})]&=\prob_{\Sigma, q}[||\pi(j)-\pi(i)||_\infty\leq (n/b)\cdot 2^{t+2}]\\
&=\prob_{\Sigma, q}[||\Sigma(j-i)||_\infty\leq (n/b)\cdot 2^{t+2}]\leq 2(2^{t+3}/b)^{d},
\end{split}
\end{equation*}
where we used \eqref{eq:li} in the last step. using this in \eqref{eq:lixchbs}, we get
\begin{equation*}
\begin{split}
\expect_{\Sigma, q}[|(S\setminus \{i\})^\pi\cap \B^\infty_{\pi(i)}((n/b)\cdot 2^{t+2})|]&\leq |S|\cdot (2^{t+3}/b)^{d}\leq (|S|/B)\cdot 2^{(t+3)d}\lesssim \e \alpha^d 2^{(t+3)d}.
\end{split}
\end{equation*}
Now by Markov's inequality we have that $i$ fails to be isolated at scale $t$ with probability at most
$$
\prob_{\Sigma, q}\left[|(S\setminus \{i\})^\pi\cap \B^\infty_{\pi(i)}((n/b)\cdot 2^{t+2})|>\alpha^{d/2} 2^{(t+3)d+t}\right]\lesssim 2^{-t} \alpha^{d/2}.
$$
Taking the union bound over all $t\geq 0$, we get
$$
\prob_{\Sigma, q}[i~\text{is not isolated}]\lesssim \sum_{t\geq 0}2^{-t} \alpha^{d/2} \lesssim \alpha^{d/2}
$$
as required.

\end{proof}

The contribution of tail noise to an element $i\in \nsq$ is captured by the following
\begin{definition}
Let $x\in \C^N$. \defskmin Let $u=\sqrt{N}\F^{-1}((P_{\Sigma, a, q}\hat x)\cdot \hat G)$. We say that an element $i\in \nsq$ is {\em well-hashed with respect to noise under $(\pi_{\Sigma, q}, a)$} if 
$$
|u_{\pi(i)}\omega^{-a^T\Sigma i}-x_i|^2=O(\sqrt{\alpha}) \e\mu^2,
$$
where we let $\pi=\pi_{\Sigma, q}$ to simplify notation.
\end{definition}

\begin{lemma}\label{lm:good-prob}
\defsk Let $\Sigma_r\in \gl, q_r, a_r\in \nsq, r=1,\ldots, r_{max}, r_{max}\geq (C/\sqrt{\alpha})\log N$ be chosen uniformly at random, where $\alpha>0$ is a constant and $C>0$ is a sufficiently large constant that depends on $\alpha$.  Then with probability at least $1-N^{-\Omega(C)}$
\begin{enumerate}
\item each $i\in \nsq$ is isolated with respect to $S$ under at least $(1-O(\sqrt{\alpha}))r_{max}$ permutations $\pi_r, r=1,\ldots, r_{max}$;
\item each $i\in \nsq$ is well-hashed with respect to noise under  at least $(1-O(\sqrt{\alpha}))r_{max}$ pairs  $(\pi_r, a_r), r=1,\ldots, r_{max}$.
\end{enumerate}
\end{lemma}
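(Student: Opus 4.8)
The plan is to reduce both claims to the single-permutation statements already available—Lemma~\ref{lm:isolated-pi} for isolation, and the second part of Lemma~\ref{lm:hashing} (together with Markov's inequality) for being well-hashed with respect to noise—and then run a Chernoff-type concentration argument over the $r_{max}$ independently chosen permutations, followed by a union bound over the $N$ coordinates $i\in\nsq$.

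First I would handle the well-hashed condition. Fix $i\in\nsq$. For a single random choice of $(\Sigma,q,a)$, Lemma~\ref{lm:hashing}(2) gives $\expect_a[|\omega^{-a^T\Sigma i}u_{\pi(i)}-x_i|^2]\lesssim \mu^2_{\Sigma,q}(i)+\delta\|x\|_2^2$, and Lemma~\ref{lm:hashing}(1) gives $\expect_{\Sigma,q}[\mu^2_{\Sigma,q}(i)]\leq C^d\|x\|_2^2/B$. Here I would split $x=x_S+x_{\nsq\setminus S}$; the contribution of $x_S$ to $\mu^2_{\Sigma,q}(i)$ can be bounded using the isolation/decay estimates (as in the intuition section, using Lemma~\ref{lm:filter-prop}(2)), and the contribution of $x_{\nsq\setminus S}$ is controlled because $\|x_{\nsq\setminus S}\|_\infty\leq\mu$ and $\|x_{\nsq\setminus S}\|_2^2 \le N\mu^2$ so that $C^d\|x_{\nsq\setminus S}\|_2^2/B \lesssim \e\mu^2 \cdot (d^{O(d)})$; combined with the $\delta\|x\|_2^2\le\mu^2/N$ bound from the Remark, one gets $\expect[|\omega^{-a^T\Sigma i}u_{\pi(i)}-x_i|^2] \le c_d\,\e\mu^2$ for a dimension-dependent constant $c_d$. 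Markov's inequality then shows that a single random $(\pi,a)$ fails to well-hash $i$ with respect to noise with probability at most $O(\sqrt\alpha)$, once the threshold constant in the definition is chosen appropriately relative to $c_d$ and $\alpha$. Similarly, Lemma~\ref{lm:isolated-pi} says a single random $\pi$ fails to isolate $i$ with probability $O(\alpha^{d/2})=O(\sqrt\alpha)$ (for $d\ge1$, taking $\alpha$ small).

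Next, for fixed $i$, let $X_r$ be the indicator that $\pi_r$ fails to isolate $i$ (resp. that $(\pi_r,a_r)$ fails to well-hash $i$ with respect to noise). The $X_r$ are independent across $r$ (the $(\Sigma_r,q_r,a_r)$ are chosen independently), each with mean $p\le c'\sqrt\alpha$ for a universal constant $c'$. By a Chernoff bound, $\prob[\sum_r X_r > 2c'\sqrt\alpha\, r_{max}] \le \exp(-\Omega(\sqrt\alpha\, r_{max}))$. Since $r_{max}\ge (C/\sqrt\alpha)\log N$, this is at most $N^{-\Omega(C)}$. A union bound over the $N=n^d$ coordinates $i$ and over the two events costs only a factor $2N$, which is absorbed into the $N^{-\Omega(C)}$ failure probability (for $C$ large enough). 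On the complementary event, every $i$ is isolated under at least $(1-2c'\sqrt\alpha)r_{max}=(1-O(\sqrt\alpha))r_{max}$ permutations and well-hashed with respect to noise under at least $(1-O(\sqrt\alpha))r_{max}$ pairs, which is exactly the two conclusions.

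The main obstacle is the first step for the well-hashed claim: getting the per-permutation failure probability down to $O(\sqrt\alpha)$ requires carefully bounding $\expect_{\Sigma,q}[\mu^2_{\Sigma,q}(i)]$ by a quantity of order $\e\mu^2$ (up to $d$-dependent constants), which in turn needs the hypothesis $\|x_{\nsq\setminus S}\|_\infty\le\mu$ from \texttt{defsk} to control the tail and the isolation-type decay bound to control the head part $x_S$—so this lemma's two parts are genuinely coupled and cannot be proved in full independence of each other. Everything else (the Chernoff bound, the union bound over coordinates, the bookkeeping of which constants depend on $\alpha$) is routine.
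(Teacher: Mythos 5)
There is a genuine gap in the treatment of the well-hashed claim. First, the bound $\|x_{\nsq\setminus S}\|_2^2\le N\mu^2$ is far too weak: with $B=k/(\e\alpha^d)$ it gives only $C^d\|x_{\nsq\setminus S}\|_2^2/B\lesssim (C\alpha)^d(N/k)\e\mu^2$, which is much larger than $\e\mu^2$ whenever $N\gg k$. The bound you actually need is $\|x_{\nsq\setminus S}\|_2^2\le 2k\mu^2$. To get it, let $S^*$ be a set of top $k$ coordinates of $x$; then
$\|x_{\nsq\setminus S}\|_2^2=\|x_{S^*\setminus S}\|_2^2+\|x_{\nsq\setminus(S\cup S^*)}\|_2^2\le k\,\|x_{\nsq\setminus S}\|_\infty^2+\err_k^2(x)\le 2k\mu^2$,
using both the cardinality and the $\ell_\infty$ hypothesis on $S$.

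Second, and more structurally, the notion of being well-hashed with respect to noise must be read as a property of the tail signal $x_{\nsq\setminus S}$ alone: the paper's proof applies Lemma~\ref{lm:hashing} to $u=\sqrt{N}\F^{-1}((P_{\Sigma,a,q}\hat x_{\nsq\setminus S})\cdot\hat G)$, and this is also the only form in which the property is invoked later, in Lemma~\ref{lm:small-noise}, to control the term $A^T$ arising from $y_{\nsq\setminus S}=x_{\nsq\setminus S}$. Under your literal reading, the target $\expect[|u_{\pi(i)}\omega^{-a^T\Sigma i}-x_i|^2]\lesssim \e\mu^2$ with the head $x_S$ included in $u$ is simply false: even conditioned on isolation, the head leakage into bucket $\pi(i)$ is of order $\alpha^{d/2}\|x_S\|_\infty$, and $\|x_S\|_\infty$ can be as large as $R^*\sqrt{\e}\mu$ with $R^*=\poly(N)$. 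Consequently the two parts of the lemma are not coupled as you suggest: isolation (depending only on $\Sigma,q$ and the locations in $S$) controls the head, well-hashing (depending on $\Sigma,q,a$ and only on $x_{\nsq\setminus S}$) controls the tail, and they are combined only afterwards in Lemma~\ref{lm:small-noise}. Your Chernoff argument over the $r_{max}$ independent triples and the union bound over $i\in\nsq$ are correct and match the paper.
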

\begin{proof}
The first claim follows by an application of Chernoff bounds and Lemma~\ref{lm:isolated-pi}.
For the second claim, let $u=\sqrt{N}\F^{-1}((P_{\Sigma, a, q}\hat x_{\nsq\setminus S})\cdot \hat G)$, where $(\Sigma, a, q)=(\Sigma_r, a_r, q_r)$ for some $r=1,\ldots, r_{max}$.
Letting $\pi=\pi_{\Sigma, q}$, by Lemma~\ref{lm:hashing}, (1) and (2) we have 
$$
\expect_{\Sigma, q, a}[\abs{u_{\pi(i)}\omega^{-a^T\Sigma i} - (x_{\nsq\setminus S})_i}^2]\leq (C')^d ||x_{\nsq\setminus S}||^2/B+||x||^2\cdot N^{-\Omega(c)}
$$
for a constant $C'>0$, 
where we asssume that arithmetic operations are performed on $c\log N$-bit numbers for some constant $c>0$. Since we assume that $R^*\leq \poly(N)$, we have
$$
\expect_{\Sigma, q, a}[\abs{u_{\pi(i)}\omega^{-a^T\Sigma i} - (x_{\nsq\setminus S})_i}^2]\leq (C'')^d ||x_{\nsq\setminus S}||^2/B+\mu^2\cdot N^{-\Omega(c)}.
$$
Let $S^*\subset\nsq$ denote a set of top $k$ coefficients of $x$ (with ties broken arbitrarily). We have
\begin{equation*}
\begin{split}
&||x_{\nsq\setminus S}||^2\leq ||x_{\nsq\setminus (S\cup S^*)}||^2+||x_{S^*\setminus S}||^2\leq||x_{\nsq\setminus S}||^2\leq ||x_{\nsq\setminus S^*}||^2+k\cdot ||x_{\nsq\setminus S}||_\infty^2\leq 2k \mu^2.
\end{split}
\end{equation*}
Since $B\geq k/(\e \alpha^d)$, we thus have
$$
\expect_{\Sigma, q, a}[\abs{u_{\pi(i)}\omega^{-a^T\Sigma i} - (x_{\nsq\setminus S})_i}^2]\leq (C'''\alpha)^d \e\mu^2
$$
for a constant $C'''>0$.

By Markov's inequality 
$$
\prob_{\Sigma, q, a}[\abs{u_{h(i)}\omega^{-a^T\Sigma i} - (x_{\nsq\setminus S})_i}^2>(C'''\sqrt{\alpha})^d \e \mu^2]<\alpha^{d/2}.
$$
As before, an application of Chernoff bounds now shows that each $i\in \nsq$  is well-hashed with respect to noise with probability at least $1-N^{-10}$, and hence all $i\in \nsq$ are well-hashed with respect to noise with probability at least $1-N^{-\Omega(C)}$ as long as $\alpha$ is smaller than an absolute constant.
\end{proof}

We now combine Lemma~\ref{lm:isolated-pi} with Lemma~\ref{lm:good-prob} to derive a bound on the noise in the ``bucket'' of an element $i\in \nsq$ due to both heavy hitters and tail noise. Note that crucially, the bound only depends on the $\ell_\infty$ norm of the head elements (i.e. the set $S$), and in particular, works for {\em any signal} that coincides with $x$ on the complement of $S$.  Lemma~\ref{lm:small-noise} will be the main tool in the analysis of our algorithm in the next section.
\begin{lemma}\label{lm:small-noise}
Let $x\in \C^N$.  \defsk Let $y\in \C^N$ be such that 
$y_{\nsq\setminus S}=x_{[n]\setminus S}$ and $||y_{S}||_\infty\leq 4\sqrt{\e}\mu 2^w$ for some $t\geq 0$.  Let $u=\sqrt{N}\F^{-1}((P_{\Sigma, a, q}\hat y)\cdot \hat G)$. Then for each  $i\in \nsq$ that is isolated and well-hashed with respect to noise under $(\Sigma_r, q_r, a_r)$ one has 
$$
|u_j\omega^{-a^T\Sigma i}-y_i|^2\lesssim \sqrt{\alpha} \e((4\mu 2^w)^2+\mu^2).
$$
\end{lemma}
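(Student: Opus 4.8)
The plan is to decompose the error $u_j\omega^{-a^T\Sigma i}-y_i$ (where $j=\pi(i)$) into a head contribution coming from $y_S$ and a tail contribution coming from $y_{\nsq\setminus S}=x_{\nsq\setminus S}$, and to bound each separately. Write $y=y^{\mathrm{head}}+y^{\mathrm{tail}}$ where $y^{\mathrm{head}}$ is supported on $S$ and $y^{\mathrm{tail}}=x_{\nsq\setminus S}$. By linearity of $\F^{-1}$ and of the permutation operator $P_{\Sigma,a,q}$, the quantity $u_{\pi(i)}\omega^{-a^T\Sigma i}-y_i$ splits as the sum of the analogous quantities for $y^{\mathrm{head}}$ and $y^{\mathrm{tail}}$, so by the triangle inequality (and $(a+b)^2\le 2a^2+2b^2$) it suffices to bound each term's squared magnitude by $O(\sqrt\alpha\,\e)\big((4\mu 2^w)^2+\mu^2\big)$.

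For the tail term, I would invoke the definition of ``well-hashed with respect to noise'': since $i$ is well-hashed under $(\Sigma_r,q_r,a_r)$, the error from hashing $x_{\nsq\setminus S}$ satisfies $|u^{\mathrm{tail}}_{\pi(i)}\omega^{-a^T\Sigma i}-(x_{\nsq\setminus S})_i|^2=O(\sqrt\alpha)\e\mu^2$, which is already of the required form. (Here I use that $(y^{\mathrm{tail}})_i=(x_{\nsq\setminus S})_i=0$ when $i\in S$, and equals $x_i$ otherwise; in both cases the well-hashed bound applies since that bound is stated for the signal $x_{\nsq\setminus S}$.) For the head term, by Lemma~\ref{lm:hashing} the error $u^{\mathrm{head}}_{\pi(i)}\omega^{-a^T\Sigma i}-(y^{\mathrm{head}})_i$ is controlled, up to negligible numerical error, by $\sum_{j\in S\setminus\{i\}}|y_j G_{o_i(j)}|$, so I need to show this sum is $O(\alpha^{1/4}\sqrt\e)\cdot 4\mu 2^w$ (squaring then gives $O(\sqrt\alpha)\e(4\mu 2^w)^2$). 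To do this I would group the elements of $(S\setminus\{i\})^\pi$ into the annuli $\B^\infty_{\pi(i)}((n/b)2^{t+2})\setminus\B^\infty_{\pi(i)}((n/b)2^{t+1})$ for $t\ge 0$; since $i$ is isolated, the $t$-th annulus contains at most $\alpha^{d/2}2^{(t+3)d+t}$ elements of $S^\pi$, and on that annulus Lemma~\ref{lm:filter-prop}(2) gives $|G_{o_i(j)}|\le (2/(1+2^{t+1}))^\fc\le 2^{-t\fc}$ roughly. Using $|y_j|\le 4\sqrt\e\mu 2^w$ for $j\in S$, the head sum is at most $4\sqrt\e\mu 2^w\sum_{t\ge 0}\alpha^{d/2}2^{(t+3)d+t}\cdot 2^{-\Omega(t\fc)}$, and since $\fc\ge 2d$ (indeed we may take $\fc$ a large enough multiple of $d$) the geometric series converges to $O(\alpha^{d/2})=O(\sqrt\alpha)$ (using $d\ge 1$), giving the bound $O(\alpha^{1/4}\sqrt\e\mu 2^w)\cdot$const after taking square roots appropriately — actually it directly gives head-sum $\lesssim \alpha^{d/2}\sqrt\e\mu 2^w$, so its square is $\lesssim \alpha^{d}\e(\mu 2^w)^2\le\sqrt\alpha\,\e(4\mu 2^w)^2$.

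The main obstacle I anticipate is the bookkeeping in the head-term bound: one must be careful that the isolation bound $\alpha^{d/2}2^{(t+3)d+t}$ on the annulus counts grows like $2^{td}$, so the filter decay exponent $\fc$ must be chosen large enough (a sufficiently large constant multiple of $d$, consistent with the paper's standing assumption $\fc=\Theta(d)$, $\fc\ge 2d$) that $2^{td}\cdot 2^{-t\fc/\,\text{const}}$ is summable with room to spare; getting the constants and the power of $\alpha$ to line up with the claimed $\sqrt\alpha$ (rather than merely $\alpha^{d/2}$, which is stronger for $d\ge 2$) requires tracking that $\alpha<1$ so $\alpha^{d/2}\le\alpha^{1/2}$. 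A secondary subtlety is handling the scales $t$ with $(n/b)2^{t+2}\ge n/2$, i.e.\ the outermost annulus that wraps around the torus; there one falls back on the trivial bound $|G|\le 1$ together with $|S|\le 2k/\e$ and the global decay of $G$, or simply notes there are only $O(\log b)$ such scales and the filter value is already $\le (2/b)^{\fc}\le$ tiny. Finally I would absorb all numerical-precision error terms using the Remark following Lemma~\ref{lm:hashing}.
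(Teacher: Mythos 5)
Your proposal is correct and follows essentially the same approach as the paper's proof: decompose the estimation error into a head part (from $y_S$), bounded via the isolation property and the polynomial filter decay summed over dyadic annuli, and a tail part (from $x_{\nsq\setminus S}$), bounded directly by the well-hashed definition, then combine via $(a+b)^2\le 2a^2+2b^2$. The issues you flag --- needing $\fc=\Theta(d)$, $\fc\geq 2d$ to make the geometric sum converge against the $2^{(t+3)d+t}$ isolation count, invoking $\alpha^{d/2}\le\sqrt\alpha$ for $d\ge 1$ to match the stated exponent, and the harmlessness of the wrap-around scales --- are exactly what the paper's proof quietly relies on.
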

\begin{proof}
We have
\begin{equation*}
\left|u_j \omega^{-a^T\Sigma i}-x_i\right|^2\leq 2 |A^{H}|^2+2|A^T|^2,
\end{equation*}
where $A^H=u^H_j \omega^{-a^T\Sigma i}-(y_{S})_i$
for $u^H=\sqrt{N}\F^{-1}((P_{\Sigma, a, q}\hat y_S)\cdot \hat G)$
and 
$A^T=u^T_j \omega^{-a^T\Sigma i}-(y_{\nsq\setminus S})_i$
for $u^T=\sqrt{N}\F^{-1}((P_{\Sigma, a, q}\hat y_{\nsq \setminus S})\cdot \hat G)$

We first bound $A^H$. Fix $i\in \nsq$. If $i$ is isolated, we have
$$
|(S\setminus \{i\})^\pi\cap \B^\infty_{\pi(i)}((n/b)\cdot 2^{t+2})|\leq \alpha^{d/2} 2^{(t+3)d}\cdot 2^{t}.
$$
for all $t\geq 0$.  
We have
\begin{equation*}
\begin{split}
|A^H|=|u^{H}_j \omega^{-a^T\Sigma i}-y_{i}|&=\left|\sum_{j \in S\setminus \{i\}} y_j G_{o_i(j)} \omega^{-a^T\Sigma j}\right|\\
 &\leq \sum_{j \in S\setminus \{i\}} |y_j G_{o_i(j)}|\\
&\leq ||y_{S}||_\infty\cdot \sum_{t\geq 0} \left(\frac{2}{1+2^{t+2}}\right)^{\fc} |(S\setminus \{i\})^\pi\cap \B^\infty_{\pi(i)}((n/b)\cdot 2^{t+2})|\\
&\leq ||y_{S}||_\infty\cdot \sum_{t\geq 0} \left(\frac{2}{1+2^{t+2}}\right)^{\fc} |(S\setminus \{i\})^\pi\cap \B^\infty_{\pi(i)}((n/b)\cdot 2^{t+2})|\\
& \leq ||y_{S}||_\infty\cdot \sum_{t\geq 0} \left(\frac{2}{1+2^{t+2}}\right)^{\fc} \alpha^{d/2} 2^{(t+3)d}\cdot 2^{t}=||y_{S}||_\infty\cdot  O(\alpha^{d/2})=O(\alpha^{d/2} \sqrt{\e}\mu 2^w)
\end{split}
\end{equation*}
as long as $\fc\geq 2d, \fc=\Theta(d)$.
Further, if $i$ is well-hashed with respect to noise, we have
$|A^T|\lesssim \sqrt{\alpha} \e\mu^2$.
Putting these estimates together yields the result. 
\end{proof}

\section{Main result}\label{sec:analysis}

In this section we use Lemma~\ref{lm:small-noise} to prove that our algorithm satisfies the stated $\ell_2/\ell_2$ sparse recovery guarantees. The proof consists of two main steps: Lemma~\ref{lm:loc-est} proves that one iteration of the peeling process (i.e. one call to \textsc{LocateAndEstimate})  outputs a list containing all elements whose values are close to the current $\ell_\infty$ norm of the residual signal. Furthermore, approximations that \textsc{LocateAndEstimate} returns for elements in its output list are correct up to a multiplicative $1\pm 1/3$ factor. Lemma~\ref{lm:linf-bound} then shows that repeated invocations of \textsc{LocateAndEstimate} reduce the $\ell_\infty$ norm of the residual signal as claimed.

\begin{lemma}\label{lm:loc-est}
Let $x\in \C^N$. \defsk  Consider the $t$-th iteration of the main loop in Algorithm~\ref{alg:sfft}. Suppose that $||x-\chi||_\infty\leq 2\nu$. Suppose that each element $i\in \nsq$ 
is isolated with respect to $S$ and well-hashed with respect to noise  under at least $(1-O(\sqrt{\alpha}))r_{max}$ values of $r=1,\ldots, r_{max}$.  Let $y=x-\chi^{(t)}$, and let $\chi'$ denote the output of \textsc{LocateAndEstimate}. Then 
one has 
\begin{enumerate}
\item $|\chi'_i- y_i|<\frac1{3} |y_i|$ for all $i\in L$;
\item  all $i$ such that $|y_i|\geq \nu$ are included in $L$.
\end{enumerate}
as long as $\alpha>0$ is a sufficiently small constant.
\end{lemma}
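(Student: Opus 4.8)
The plan is to analyze a single call to \textsc{LocateAndEstimate} on the residual $y = x - \chi^{(t)}$, using the bucket-noise bound of Lemma~\ref{lm:small-noise} together with the fact that the set $S$ of head elements is fixed in advance, so the ``isolated / well-hashed'' hypotheses (guaranteed with high probability by Lemma~\ref{lm:good-prob}) apply verbatim to $y$. The key observation is that $y$ agrees with $x$ on $\nsq\setminus S$ (since the algorithm only ever updates coordinates in $S$ — this must be maintained as an invariant, and indeed follows from part (2) of the lemma applied inductively together with the magnitude test), and $\|y_S\|_\infty \le \|x-\chi\|_\infty \le 2\nu$. Recalling that $\nu = 4\sqrt{\e}\mu 2^{T-(t+1)}$, we can write $2\nu = 4\sqrt{\e}\mu 2^w$ with $w = T-t$, so Lemma~\ref{lm:small-noise} gives, for every $i$ isolated and well-hashed under $(\Sigma_r,q_r,a_r)$,
\[
\left|v^r_{\pi_r(i)}\omega^{-a_r^T\Sigma_r i} - y_i\right|^2 \lesssim \sqrt{\alpha}\,\e\big((4\mu 2^w)^2 + \mu^2\big) \lesssim \sqrt{\alpha}\,\nu^2,
\]
where we also use $\mu^2 \le \e\mu^2/\e \le \nu^2$ (for $t \le T-1$, $2^w \ge 1$, so $\nu^2 \ge 16\e\mu^2 \ge \mu^2$ when $\e$ is bounded). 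So each ``good'' sample for coordinate $i$ is within $c\alpha^{1/4}\nu$ of $y_i$ in absolute value.

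Next I would push this through the coordinatewise median. Since at least $(1-O(\sqrt{\alpha}))r_{max}$ of the $r$'s are good for a fixed $i$, strictly more than half the samples $v^r_{\pi_r(i)}\omega^{-a_r^T\Sigma_r i}$ lie within $c\alpha^{1/4}\nu$ of $y_i$ (both in real and imaginary part, after adjusting the constant), hence the coordinatewise median $\eta$ satisfies $|\eta - y_i| \le c'\alpha^{1/4}\nu$ for a suitable constant $c'$. This is the uniform additive error bound for the estimator, valid for \emph{all} $i\in\nsq$. From here both claims follow by choosing $\alpha$ small enough that $c'\alpha^{1/4} < 1/12$, say. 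For claim (2): if $|y_i| \ge \nu$, then $|\eta| \ge |y_i| - c'\alpha^{1/4}\nu > \nu - \nu/12 > \nu/2$, so $i$ passes the magnitude test in line 16 and is added to $L$. For claim (1): if $i \in L$, then $|\eta| > \nu/2$, and combined with $|\eta - y_i| \le c'\alpha^{1/4}\nu < \nu/12$ we get $|y_i| > \nu/2 - \nu/12 = 5\nu/12$; therefore $c'\alpha^{1/4}\nu < (\nu/12) < (1/3)\cdot(5\nu/12) \le (1/3)|y_i|$, so $|\chi'_i - y_i| = |\eta - y_i| < \tfrac13|y_i|$.

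I would also need to dispose of one bookkeeping point: the hypotheses of Lemma~\ref{lm:small-noise} are stated for a signal $y$ with $\|y_S\|_\infty \le 4\sqrt{\e}\mu 2^w$, whereas we only have $\|x-\chi\|_\infty\le 2\nu$; one checks that $2\nu = 4\sqrt{\e}\mu 2^{T-t}$ is exactly of the required form with the integer $w = T-t \ge 1$, and that the restriction of $y$ to $\nsq\setminus S$ equals $x_{\nsq\setminus S}$, which needs $\chi^{(t)}$ to be supported on $S$ — this is where I invoke that all prior updates were confined to $L\subseteq S$ (itself a consequence, inductively, of claim (2) of the lemma at earlier iterations, since $S$ contains every coordinate with $|x_i|^2 \ge \e\mu^2$, and coordinates with smaller residual never pass the test once the $\ell_\infty$ norm is controlled — strictly this coupling is handled in Lemma~\ref{lm:linf-bound}, so here I just assume $\supp\chi^{(t)}\subseteq S$ as part of the standing setup). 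The main obstacle, and the only genuinely delicate point, is the median argument across $r$: one must be careful that ``good for $i$'' (isolated \emph{and} well-hashed) fails for only an $O(\sqrt\alpha)$ fraction of $r$'s — which is precisely what Lemma~\ref{lm:good-prob} buys us via a union bound over its two events — and that taking the \emph{coordinatewise} (real/imaginary) median of complex numbers each within $\rho$ of a target $y_i$, for a strict majority of indices, still lands within $O(\rho)$ of $y_i$; this costs only a constant factor and is absorbed into $c'$.
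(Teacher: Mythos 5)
Your proof follows essentially the same route as the paper's: apply Lemma~\ref{lm:small-noise} to the residual $y=x-\chi^{(t)}$ (checking, as you correctly note, that $\supp\chi^{(t)}\subseteq S$ so that $y_{\nsq\setminus S}=x_{\nsq\setminus S}$ and $\|y_S\|_\infty\le 2\nu = 4\sqrt{\e}\mu 2^{T-t}$), push the per-sample error $O(\alpha^{1/4}\nu)$ through the coordinatewise median over the $(1-O(\sqrt\alpha))r_{max}$ good repetitions, and then read off both claims from the magnitude test at threshold $\nu/2$. One small slip: your parenthetical justification ``$\nu^2\ge 16\e\mu^2\ge\mu^2$'' requires $\e\ge 1/16$, which need not hold; but it is also not what you need, since the term in the bound is $\e\mu^2$ (not $\mu^2$), and $\e\mu^2 \le \nu^2/16$ holds automatically for $t\le T-1$, so the displayed inequality $\e\big((4\mu 2^w)^2+\mu^2\big)\lesssim\nu^2$ is still correct after replacing that sentence.
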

\begin{proof}
We let $y:=x-\chi^{(t)}$ to simplify notation.  Fix $i\in \nsq$. 

Consider $r$ such that $i$ is isolated under $\pi_r$ and well-hashed with respect to noise under $(\pi_r, a_r)$. Then we have by Lemma~\ref{lm:small-noise}
\begin{equation}\label{eq:err-small}
\begin{split}
|v^r_j\omega^{-a^T\Sigma i}-y_i|^2& \lesssim \sqrt{\alpha}(\e(||y_{[S]}||_\infty)^2+\e\mu^2)\lesssim \sqrt{\alpha}\e((2\nu)^2+\mu^2)\leq (\frac1{16} \nu)^2\\
\end{split}
\end{equation}
as long as $\alpha$ is smaller than an absolute constant.

Since each $i$ is well-hashed with respect to at least at $1-O(\sqrt{\alpha})$ fraction of permutations, we get that $|y_i-\eta|\leq \frac1{16}\nu$.
Now if $i\in L$, it must be that $|\eta|>\nu/2$, but then 
\begin{equation}\label{eq:sfokgj}
|y_i|\geq |\eta|-\nu/16>\nu/2-\nu/16>(3/4)\nu.
\end{equation}
This also implies that $|\chi'_i-y_i|\leq \nu/16<(4/3)|y_i|/16<|y_i|/3$, so the first claim follows.

For the second claim, it suffices to note that if $|y_i|>\nu$, then we must have $|\eta|\geq |y_i|-\nu/16>\nu/2$, so $i$ passes the magnitude test and is hence included in $L$.
\end{proof}

We can now prove the main lemma required for analysis of Algorithm~\ref{alg:sfft}:
\begin{lemma}\label{lm:linf-bound}
Let $x\in \C^N$. Let $\Sigma_r\in \gl, a_r, q_r\in \nsq, r=1,\ldots, r_{max}=C\log N$, where $C>0$ is a sufficiently large constant, be chosen uniformly at random. Then with probability at least $1-N^{-\Omega(C)}$ one has $||x-\chi^{(T-1)}||_\infty\leq 4\sqrt{\e}\mu$. 
\end{lemma}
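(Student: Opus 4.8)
The plan is to argue by induction on the iteration counter $t$ that at the start of iteration $t$ we have $\|x-\chi^{(t)}\|_\infty \le 2\nu_t$, where $\nu_t = 4\sqrt{\e}\mu\, 2^{T-(t+1)}$ is exactly the threshold passed to \textsc{LocateAndEstimate} in line~12 of Algorithm~\ref{alg:sfft}. The base case $t=0$ follows since $\chi^{(0)}=0$ and $\|x\|_\infty \le R^*\sqrt{\e}\mu = 2\cdot 4\sqrt{\e}\mu\, 2^{T-1} = 2\nu_0$ by the assumption $T=\log_2 R^*$ (with the $R^*$ convention of line~1, $R^*$ bounds $\|x\|_\infty/(\sqrt{\e}\mu)$, so the constants line up up to the factor absorbed into the definition of $\nu_0$). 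The inductive step is where Lemma~\ref{lm:loc-est} does the work.

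\textbf{Inductive step.} Condition on the event of Lemma~\ref{lm:good-prob}, which holds with probability $1-N^{-\Omega(C)}$ and guarantees that \emph{every} $i\in\nsq$ is isolated with respect to the fixed set $S$, and well-hashed with respect to noise, under at least a $(1-O(\sqrt\alpha))$-fraction of the $r_{max}$ permutations. The crucial point is that $S$ is fixed in advance (it is the set $\{i: |x_i|^2 \ge \e\mu^2\}$), so this event is independent of the execution path; in particular it applies simultaneously to every residual signal $x-\chi^{(t)}$ that arises, since all of these coincide with $x$ outside $S$ --- this is exactly why the reuse of randomness is harmless. Assuming $\|x-\chi^{(t)}\|_\infty \le 2\nu_t$, apply Lemma~\ref{lm:loc-est} with $\nu=\nu_t$ and $y=x-\chi^{(t)}$: its conclusion~(1) gives $|\chi'_i - y_i| < \frac13|y_i|$ for every $i\in L$, and conclusion~(2) gives that every $i$ with $|y_i|\ge \nu_t$ lies in $L$. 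Then for $i\in L$, $|y_i - \chi'_i| < \frac13|y_i|$; combined with the bound \eqref{eq:sfokgj} showing $|y_i| \le |\eta|+\nu_t/16 \le \frac{17}{16}\|x-\chi^{(t)}\|_\infty$ (so the entries of $\chi'$ are themselves bounded), and using that $\chi'$ is supported on $L\subseteq S$, we get that the new residual $y - \chi' = x-\chi^{(t+1)}$ satisfies, coordinatewise: for $i\notin L$, either $i\notin S$ so $|y_i|\le\mu \le \nu_t/2 \le \nu_{t+1}$... wait, more carefully, for $i\notin L$ we have $|y_i| < \nu_t$ by conclusion~(2), and in fact the median test failing means $|\eta|\le\nu_t/2$, hence $|y_i| \le |\eta| + \nu_t/16 \le \nu_t/2+\nu_t/16 < \nu_t = 2\nu_{t+1}$; and for $i\in L$, $|y_i-\chi'_i| < \frac13|y_i| \le \frac13\cdot\frac{17}{16}\cdot 2\nu_t < \nu_t = 2\nu_{t+1}$. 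Hence $\|x-\chi^{(t+1)}\|_\infty \le 2\nu_{t+1}$, closing the induction.

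\textbf{Conclusion.} Running the induction through $t=0,1,\dots,T-1$, after the last iteration we obtain $\|x-\chi^{(T)}\|_\infty \le 2\nu_{T-1} = 2\cdot 4\sqrt{\e}\mu\, 2^{T-((T-1)+1)} = 8\sqrt{\e}\mu$; a slightly more careful bookkeeping of the constant (noting the last \textsc{LocateAndEstimate} call uses threshold $\nu_{T-1}$ and \emph{guarantees} $\|x-\chi^{(t)}-\chi'\|_\infty \le \nu_{T-1}$ by its specification, which is $4\sqrt{\e}\mu\,2^{T-T} = 4\sqrt{\e}\mu$) gives the claimed $\|x-\chi^{(T-1)}\|_\infty \le 4\sqrt{\e}\mu$. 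The union bound over the single event of Lemma~\ref{lm:good-prob} suffices since that event already handles all iterations simultaneously, so the overall failure probability is $N^{-\Omega(C)}$.

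\textbf{Main obstacle.} The one genuinely delicate point is verifying that the failure event for isolation/well-hashedness need only be union-bounded \emph{once}, not once per iteration: this rests entirely on the observation that isolation of $i\in S$ depends only on the \emph{locations} of elements of $S$ (fixed before the algorithm runs) and not on the current values $x-\chi^{(t)}$, and that Lemma~\ref{lm:small-noise} is stated for \emph{any} $y$ agreeing with $x$ off $S$ with $\|y_S\|_\infty$ under control. The rest is careful constant-tracking through the recursion $\nu_t \to \nu_{t+1} = \nu_t/2$, making sure the $\frac13$ multiplicative error and the $\nu_t/16$ additive error compose to keep the residual below $2\nu_{t+1}$ rather than drifting.
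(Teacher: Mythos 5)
Your high-level plan is the same as the paper's: fix $S=\{i:|x_i|>\sqrt{\e}\mu\}$ in advance, condition on the single event of Lemma~\ref{lm:good-prob}, and run the threshold-halving induction driven by Lemma~\ref{lm:loc-est}. You correctly identify the crux --- isolation depends only on the (fixed) locations in $S$, so the union bound is paid once.

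There is, however, a genuine gap in your inductive step. Your inductive hypothesis is only $\|x-\chi^{(t)}\|_\infty\le 2\nu_t$. But to invoke Lemma~\ref{lm:loc-est} on $y=x-\chi^{(t)}$ (which internally invokes Lemma~\ref{lm:small-noise}), you need $y$ to \emph{agree with $x$ off $S$}, i.e.\ you need $\chi^{(t)}$ to be supported on $S$. You state this twice --- ``all of these coincide with $x$ outside $S$'' and ``$\chi'$ is supported on $L\subseteq S$'' --- but never prove it, and it is not automatic: if the algorithm ever placed a nonzero entry of $\chi^{(t)}$ outside $S$, the residual would no longer match $x$ there and the well-hashedness guarantee would not apply. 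The paper carries $\chi^{(t)}_{\nsq\setminus S}\equiv 0$ (together with $|x_i-\chi_i^{(t)}|\le |x_i|$) as an explicit part of the inductive invariant and re-establishes it each round: if $i\in L$, then by \eqref{eq:sfokgj} $|y_i|>\tfrac34\nu_t\ge 3\sqrt{\e}\mu$; if also $i\notin S$, the inductive hypothesis gives $\chi^{(t)}_i=0$, hence $|x_i|=|y_i|>\sqrt{\e}\mu$, contradicting $i\notin S$. Without this step your argument silently assumes exactly the property that makes the reuse of randomness legitimate. Adding the support invariant (and the short argument above) to your induction closes the gap; apart from this omission, and some loose constant bookkeeping at the final step (the paper's own proof actually yields $8\sqrt{\e}\mu$ for $\chi^{(T-1)}$, which is what the Theorem uses), your structure matches the paper's.
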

\begin{proof}
We now fix a specific choice of the set $S\subseteq \nsq$. Let
\begin{equation}\label{eq:s-def}
S=\{i\in \nsq: |x_i|>\sqrt{\e}\mu\}.
\end{equation}

First note that $||x_{\nsq\setminus S}||_\infty\leq \mu$. Also, we have $|S|\leq 2k/\e$. Indeed, recall that $\mu^2=\err_k^2(x)/k$. If $|S|>2k/\e$, more than $k/\e$ elements of $S$ belong to the tail, amounting to at least $\e\mu^2\cdot (k/\e)>\err_k^2(x)$ tail mass.  Thus, since $r_{max}\geq C\log N$, and by the choice of $B$ in Algorithm~\ref{alg:sfft}, we have by Lemma~\ref{lm:good-prob} that with probability at least $1-N^{-\Omega(C)}$
\begin{enumerate}
\item each $i\in \nsq$ is isolated with respect to $S$ under at least $(1-O(\sqrt{\alpha}))r_{max}$ permutations $\pi_r, r=1,\ldots, r_{max}$;
\item each $i\in \nsq$ is well-hashed with respect to noise under  at least $(1-O(\sqrt{\alpha}))r_{max}$ permutations $\pi_r, r=1,\ldots, r_{max}$.
\end{enumerate}
This ensures that the preconditions of Lemma~\ref{lm:loc-est} are satisfied.
We now prove the following statement for $t\in [0:T]$ by induction on $t$:
\begin{enumerate}
\item $\chi^{(t)}_{[n]\setminus S}\equiv 0$
\item $||(x-\chi^{(t)})_{S}||_\infty\leq 4\sqrt{\e}\mu 2^{T-t}$.
\item $|x_i-\chi^{(t)}_i|\leq |x_i|$ for all $i\in \nsq$.
\end{enumerate}

\begin{description}
\item[Base:$t=0$] True by the choice of $T$.
\item[Inductive step: $t\to t+1$] Consider the list $L$ constructed by \textsc{LocateAndEstimate} at iteration $t$. 
Let $S^*:=\{i\in S: |(x-\chi^{(t)})_i|>4\sqrt{\e}\mu 2^{T-(t+1)}\}$.  We have $S^*\subseteq L$ by Lemma~\ref{lm:loc-est}, (2).
Thus, 
$$
||(x-\chi^{(t+1)})_{S}||_{\infty}\leq \text{max}\{||(x-\chi^{(t+1)})_{S^*}||_{\infty}, ||(x-\chi^{(t+1)})_{S\setminus S^*}||_{\infty}, ||(x-\chi^{(t+1)})_{\nsq\setminus S}||_{\infty}\}.
$$
By Lemma~\ref{lm:loc-est}, (1) we have $|x_i-\chi^{(t+1)}_i|\leq |x_i-\chi^{(t)}_i|/3$ for all $i\in L$, so (3) follows. Furthemore, this implies that
\begin{enumerate}
\item $||(x-\chi^{(t+1)})_{S^*}||_{\infty}\leq ||x-\chi^{(t)}||_{\infty}/3\leq 4\sqrt{\e}\mu 2^{T-(t+1)}$ by the inductive hypothesis;
\item $||(x-\chi^{(t+1)})_{S\setminus S^*}||_{\infty}\leq 4\sqrt{\e}\mu 2^{T-(t+1)}$ by definition of $S^*$;
\item $||(x-\chi^{(t+1)})_{\nsq \setminus S}||_{\infty}=||x_{\nsq \setminus S}||_{\infty}\leq 4\sqrt{\e}\mu 2^{T-(t+1)}$ by the inductive hypothesis together with the definition of $\mu$ and the fact that $t\leq T-1$.
\end{enumerate}
This proves (2).

Finally, by Lemma~\ref{lm:loc-est}, (2) only elements $i$ such that $|(x-\chi^{(t)})_i|>\frac{3}{4} 4\sqrt{\e}\mu2^{T-t}\geq (3/2)4\sqrt{\e}\mu$ are included in $L$. Since $|(x-\chi^{(t)})_i|\leq |x_i|$, this means that $|x_i|>4\mu$, i.e. $i\in S$ and $\chi^{(t+1)}_{[n]\setminus S}=0$, as required.
\end{description}
\end{proof}

We can now prove 
\begin{theorem}
Algorithm~\ref{alg:sfft} returns a vector $\chi$  such that 
$$
||x-\chi||_2\leq (1+O(\e))\err_k(x).
$$
The number of samples is bounded by $d^{O(d)}\frac1{\e}k\log N$, and the runtime is bounded by $O(N\log^3 N)$.
\end{theorem}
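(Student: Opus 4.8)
The plan is to combine Lemma~\ref{lm:linf-bound} (which controls the $\ell_\infty$ error of the final residual) with a standard conversion from an $\ell_\infty/\ell_2$-type guarantee to the desired $\ell_2/\ell_2$ bound, and separately to account for the sample complexity and the runtime. First I would invoke Lemma~\ref{lm:linf-bound}: with probability $1-N^{-\Omega(C)}$ we have $\|x-\chi^{(T-1)}\|_\infty\le 4\sqrt{\e}\mu$, and, crucially, tracing through the induction in that lemma, $\chi^{(T)}$ is supported on the set $S=\{i:|x_i|>\sqrt{\e}\mu\}$, which has size $|S|\le 2k/\e$. So the final residual $z:=x-\chi^{(T)}$ satisfies $\|z\|_\infty \lesssim \sqrt{\e}\mu$ on coordinates of $S$ and agrees with $x$ off $S$.

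The main analytic step is the energy bound. I would split $\|z\|_2^2 = \|z_S\|_2^2 + \|z_{\nsq\setminus S}\|_2^2$. For the second term, $z_{\nsq\setminus S}=x_{\nsq\setminus S}$, and since $S$ contains the top-$k$ set $S^*$ up to coordinates with $|x_i|\le\sqrt{\e}\mu$, one gets $\|x_{\nsq\setminus S}\|_2^2 \le \|x_{\nsq\setminus S^*}\|_2^2 = \err_k^2(x)$ (indeed every coordinate outside $S^*$ but inside $S\setminus S^*$ has been removed, only shrinking the norm). For the first term, $\|z_S\|_2^2 \le |S|\cdot\|z_S\|_\infty^2 \le (2k/\e)\cdot(4\sqrt{\e}\mu)^2 = 32k\mu^2 = 32\,\err_k^2(x)$ by the definition $\mu^2=\err_k^2(x)/k$. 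This already gives $\|x-\chi^{(T)}\|_2 \le O(1)\cdot\err_k(x)$, a constant-factor $\ell_2/\ell_2$ guarantee. To sharpen the constant to $1+O(\e)$ as claimed, I would rescale $\e$: run the algorithm with parameter $\e' = c\e^2$ for a suitable constant $c$, so that the head set is $S'=\{i:|x_i|>\sqrt{\e'}\mu\}$, $|S'|\le 2k/\e'$, and $\|z_{S'}\|_2^2 \le (2k/\e')(4\sqrt{\e'}\mu)^2 = 32k\mu^2$ — wait, that does not shrink; instead the correct route is to note that the $\ell_\infty$ bound on the residual is $4\sqrt{\e}\mu$ while the relevant comparison is against $\err_k$, so choosing the threshold parameter to be $\Theta(\e^2)$ while keeping $B=\Theta(k/\e)$ buckets makes $\|z_S\|_2^2 \le |S|(4\sqrt{\e^2}\mu)^2 = O(\e^2)\cdot (k/\e^2)\mu^2 \cdot = O(k\mu^2)$; the clean statement is that with threshold $\e$ and $|S|\le 2k/\e$ one pays $O(k\mu^2)$, and to get $(1+O(\e))\err_k$ one instead compares to a slightly larger sparsity or runs with a polynomially smaller $\e$ internally — I would follow the now-standard reduction (as in \cite{IKP,GLPS}) of first reducing to the case where the error is dominated by the tail and then applying the $\ell_\infty$ bound with the internal accuracy parameter set to $\Theta(\e^2)$, which costs only a constant factor more samples.

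For the resource bounds: the samples are taken only once, in the loop over $r=1,\dots,r_{max}$ in Algorithm~\ref{alg:sfft}, where $r_{max}=\Theta(\log N)$ and each $u^r$ is computed from $\F^{-1}((P_{\Sigma_r,a_r,q_r}\hat x)\cdot\hat G)$; since $\supp\hat G\subseteq[-Fb,Fb]^d$ with $F=\Theta(d)$ and $B=b^d=\Theta(k/\e)$, each such computation reads $(2Fb+1)^d = (O(d))^d\cdot B = d^{O(d)}(k/\e)$ Fourier samples, so the total is $d^{O(d)}\frac1\e k\log N$ as claimed (the $\alpha$ in $B=k/(\e\alpha^d)$ is absorbed into $d^{O(d)}$). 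For the runtime, each iteration $t=0,\dots,T-1$ with $T=\Theta(\log R^*)=\Theta(\log N)$ calls \textsc{LocateAndEstimate}, which computes $r_{max}=\Theta(\log N)$ inverse FFTs of size $N$ (cost $O(N\log N)$ each) to form the $v^r$ and then takes a coordinatewise median over $r_{max}$ values at each of $N$ points (cost $O(N\log N)$); so each iteration costs $O(N\log^2 N)$ and the total is $O(N\log^3 N)$.

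The step I expect to be the main obstacle is getting the approximation constant down to exactly $1+O(\e)$ rather than a generic $O(1)$: the $\ell_\infty$ guarantee from Lemma~\ref{lm:linf-bound} is at level $4\sqrt{\e}\mu$ with $|S|\le 2k/\e$, which out of the box yields $\|z_S\|_2^2 = O(k\mu^2) = O(\err_k^2)$, a constant factor, not $1+O(\e)$. Closing this gap requires the standard but slightly delicate observation that $z_S$ really contributes at the level $\e\cdot\err_k^2$ once one accounts correctly for how many coordinates can simultaneously be near the $\ell_\infty$ threshold (a coordinate at level $\sqrt{\e}\mu$ uses up $\e\mu^2$ of tail budget, so there are at most $k/\e\cdot$... — precisely, $\|z_S\|_2^2\le \sum_{i\in S}|z_i|^2$ and one bounds this not by $|S|\|z_S\|_\infty^2$ but by a dyadic-level argument against the tail energy), together with running the recovery with internal parameter $\e^2$; I would present this reduction carefully, as it is where the claimed tightness genuinely comes from.
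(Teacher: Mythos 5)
Your proof correctly identifies the obstacle, but you never actually close it: splitting by the head set $S$ of size $2k/\e$ and applying $\|z_S\|_2^2 \le |S|\cdot\|z_S\|_\infty^2$ gives $32k\mu^2 = 32\,\err_k^2(x)$, a constant factor, and your proposed fixes (rescaling the internal $\e$ to $\e^2$, or a dyadic-level argument) are sketched without a working computation --- your own text acknowledges the rescaling ``does not shrink'' the bound. The paper avoids the issue by a different and simpler decomposition. It splits on the top-$k$ set (call it $[k]$) rather than on $S$, and invokes the third invariant of the induction in Lemma~\ref{lm:linf-bound}, namely $|x_i - \chi^{(t)}_i| \le |x_i|$ for \emph{all} $i$, which you did not use. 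With that, $\|(x-\chi)_{\nsq\setminus [k]}\|_2^2 \le \|x_{\nsq\setminus[k]}\|_2^2 = \err_k^2(x)$ immediately, and the $\ell_\infty$ bound is applied only over the $k$ coordinates of $[k]$:
\[
\|(x-\chi)_{[k]}\|_2^2 \le k\cdot\|x-\chi\|_\infty^2 \le k\cdot O(\e\mu^2) = O(\e)\,\err_k^2(x).
\]
Adding the two pieces gives $(1+O(\e))\err_k^2(x)$ directly, with no change to the internal accuracy parameter and no dyadic argument. The essential idea you were missing is that the monotonicity property (3) lets you charge the off--top-$k$ mass to $\err_k^2(x)$ at no loss, so you only ever pay the $\ell_\infty$ error on $k$ coordinates, not $|S|=\Theta(k/\e)$ of them; that single change is exactly where the $1+O(\e)$ tightness comes from. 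Your accounting of the sample complexity and runtime is correct and matches the paper.
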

\begin{proof}
By Lemma~\ref{lm:linf-bound} we have by setting $t=T-1$
$||x-\chi||_\infty\leq 8\sqrt{\e}\mu$,
so
$$
||x-\chi||^2_2\leq ||(x-\chi)_{[k]}||^2_\infty\cdot k+||(x-\chi)_{\nsq\setminus [k]}||^2_2\leq ||(x-\chi)_{[k]}||^2_\infty\cdot k+||x_{\nsq\setminus [k]}||^2_2\leq (1+O(\e)) \err^2_k(x),
$$
where we used Lemma~\ref{lm:linf-bound}, (3) to upper bound 
$||(x-\chi)_{\nsq\setminus [k]}||^2_2$ with $||x_{\nsq\setminus [k]}||^2_2$.

We now bound sampling complexity. The support of the filter $G$ is bounded by $O(B \fc^d)=d^{O(d)}\frac1{\e} k$ by construction. We are using $r_{max}=\Theta(\log N)$, amounting to  $d^{O(d)}\frac1{\e} k \log N$ sampling complexity overall. The location and estimation loop takes $O(N\log^3 N)$ time: each time the vector $v^r$ is calculated in \textsc{LocateAndEstimate} $O(N\log N)$ time is used by the FFT computation,  so since we compute $O(\log N)$ vectors during each of $O(\log N)$ iterations, this results in an $O(N\log^3 N)$ contribution to runtime.
\end{proof}
\section{Experimental evaluation}\label{sec:experiments}
In this section we describe results of an experimental evaluation of our algorithm from section~\ref{sec:algo-linear}. 
In order to avoid the issue of numerical precision and make the notion of recovery probability well-defined, we focus  the problem of {\em support recovery}, where the goal is to recover the {\em positions} of the non-zero coefficients.
We first describe the experimental setup that we used to evaluate our algorithm, and follow with evaluation results.

\subsection{Experimental setup}
We present experiments for support  recovery from one-dimensional Fourier measurements (i.e. $d=1$). In this problem one is given frequency domain access to a signal $\wh{x}$ that is {\em exactly} $k$-sparse  in the time domain, and needs to recovery the support of $x$ exactly. 
The support of $x$ was always chosen to be uniformly among subsets of $[N]$ of size $k$. We denote the sparsity by $k$, and the support of $x$ by $S\subseteq \nsq$.

We compared our algorithm to two algorithms for sparse recovery:
\begin{itemize}
\item $\ell_1$-minimization, a state-of-the-art technique for practical sparse recovery using Gaussian and Fourier measurements. The best known sample bounds for the sample complexity in the case of approximate sparse recovery are $O(k\log^3 k\log N)$~\cite{CTao,RV, CGV}. The running time of $\ell_1$ minimization is dominated by solving a linear program. We used the implementation from \textsc{SPGL1}~\cite{BergFriedlander:2008, spgl1:2007}, a standard Matlab package for sparse recovery using $\ell_1$-minimization.
For this experiment we let $x_i$ be chosen uniformly random on the unit circle in the complex plane when $i\in S$ and equal to $0$ otherwise.  

\item Sequential Sparse Matching Pursuit (SSMP)~\cite{BI2009}. SSMP is an iterative algorithm for sparse recovery using sparse matrices. SSMP has optimal $O(k\log (N/k))$ sample complexity bounds and $\tilde O(N)$ runtime. The sample complexity and runtime bounds are similar to that of our algorithm, which makes SSMP a natural point of comparison. Note, however, that the measurement matrices used by SSMP are binary and sparse, i.e.,  very different from the Fourier matrix.
For this experiment we let $x_i$ be uniformly random in $\{-1, +1\}$  when $i\in S$ and $0$ otherwise.  
\end{itemize}

Our implementation of Algorithm~\ref{alg:sfft} uses the following parameters. First, the filter $G$ was the simple boxcar filter with support $B=k+1$. The number of measurements $r_{max}$ was varied between $5$ and $25$, with the phase transition occuring around $r_{max}=18$ for most values of $k$.  The geometric sequence of thresholds that \textsc{LocateAndEstimate} is called with was chosen to be powers of $1.2$ (empirically, ratios closer to $1$ improve the performance of the algorithm, at the expense of increased runtime).  We use $N=2^{15}$ and $k=10, 20,\ldots, 100$ for all experiments. We report empirical probability of recovery estimated from $50$ trials.

In order to solve the support recovery problem using \textsc{SPGL1} and Algorithm~\ref{alg:sfft},  we first let both algorithms recover an approximation $x'$ to $x$, and then let 
$$
S:=\{t\in [N]:|x'_t|\geq 1/2\}
$$
denote the recovered support.

\subsection{Results}

\paragraph{Comparison to $\ell_1$-minimization.}
A plot of recovery probability as a function of the number of (complex) measurements and sparsity for \textsc{SPGL1} and Algorithm~\ref{alg:sfft} is given in Fig.~\ref{fig:us-vs-spgl1}.

\begin{figure}[H]
 \includegraphics[width=3.1in]{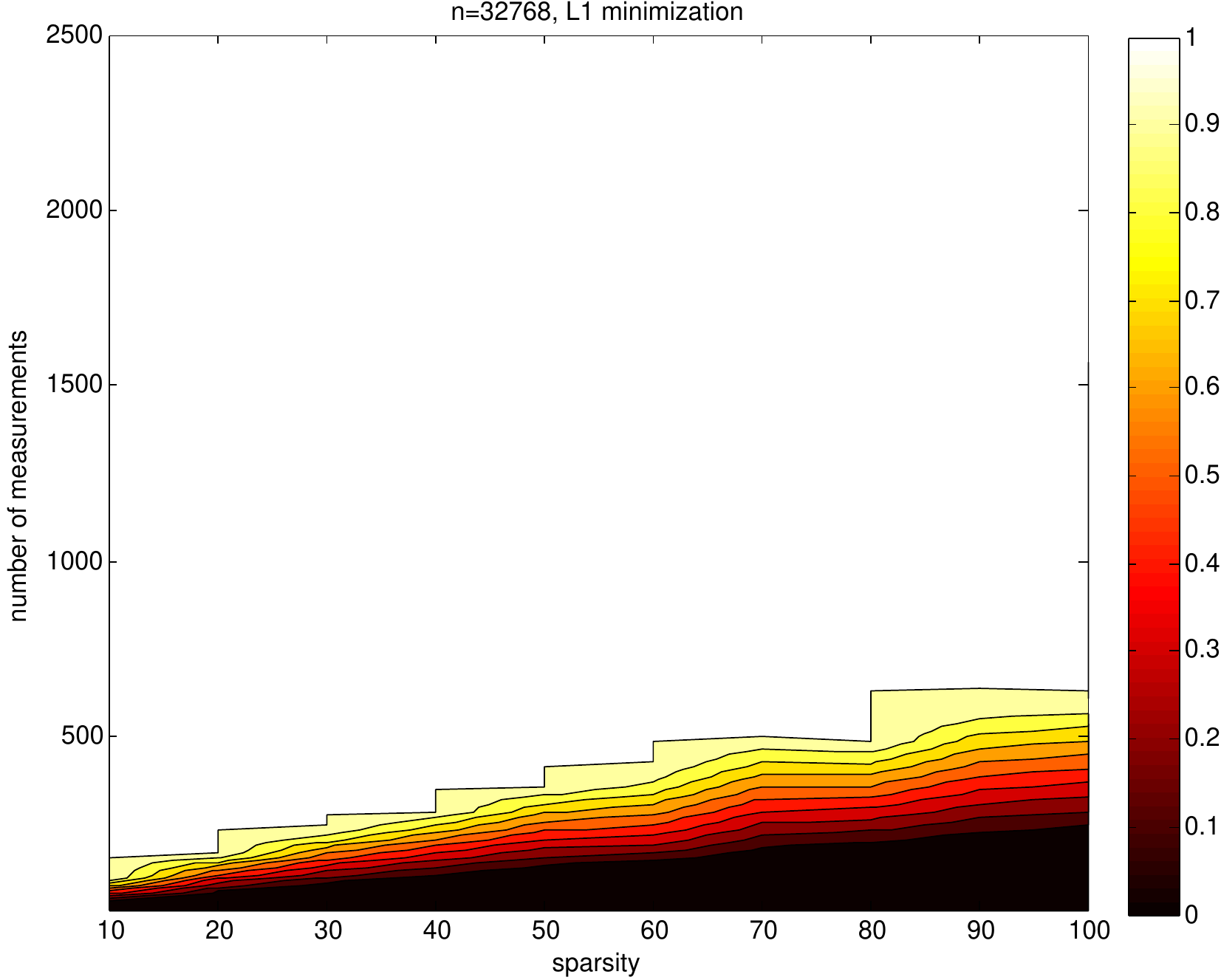}
  \includegraphics[width=3.1in]{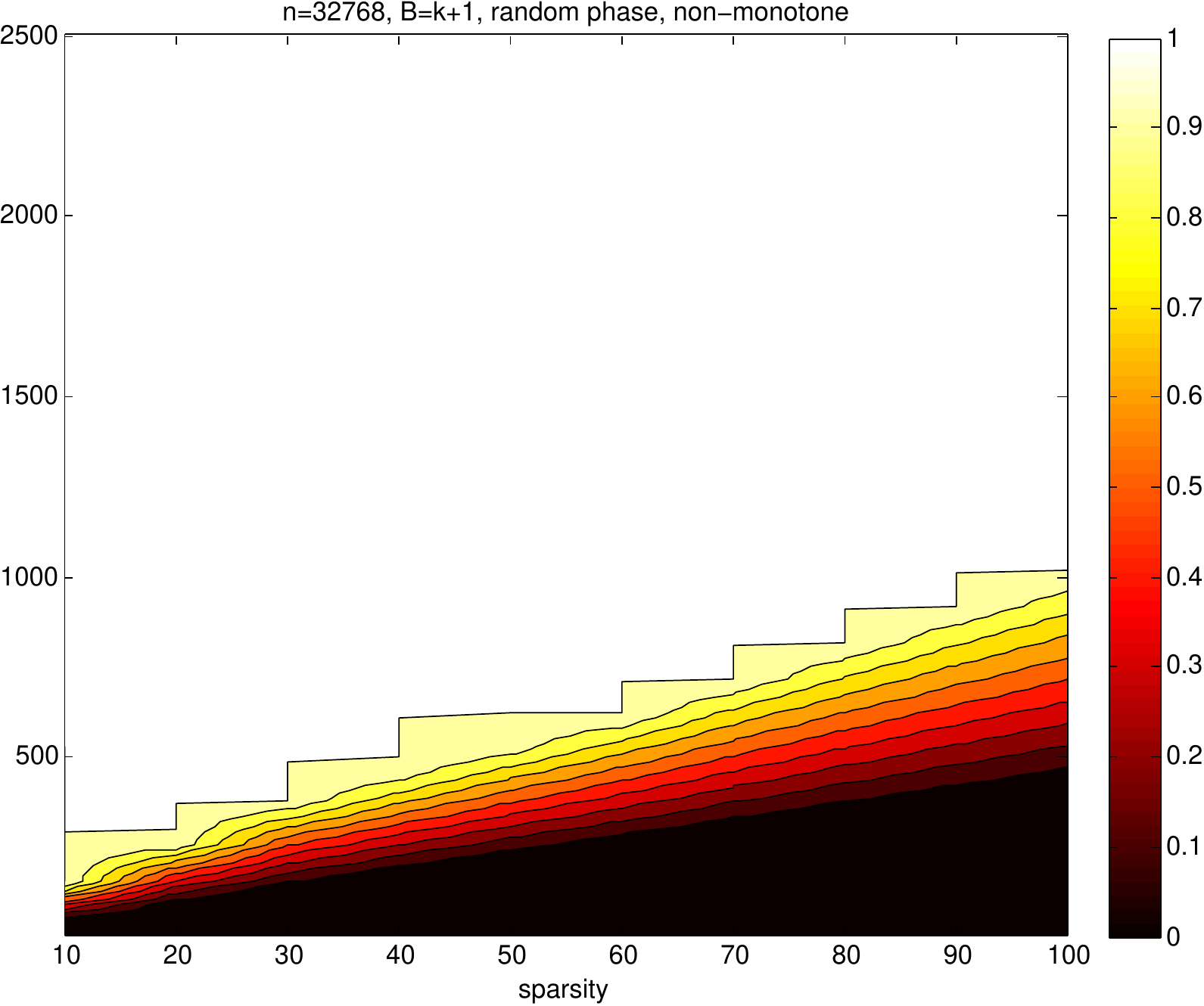}
\caption{Success probability as a function of sparsity and number of measurements: \textsc{SPGL1} (left panel) and Algorithm~\ref{alg:sfft} (right panel). The number of complex measurements is reported.}\label{fig:us-vs-spgl1}
\end{figure}

The empirical sample complexity of our algorithm is within a factor of $2$ of $\ell_1$ minimization if success probability $0.9$ is desired. 
 The best known theoretical bounds for the general setting of approximate sparse recovery show that $O(k\log^3 k \log N)$ samples are sufficient. The runtime is bounded by the cost of solving an $N \times m$ linear program. Our algorithm provides comparable empirical performance, while providing optimal measurement bound and $\tilde O(N)$ runtime.

\paragraph{Comparison to SSMP.}

We now present a comparison to SSMP~\cite{BI2009}, which is a state-of-the art iterative algorithm for sparse recovery using sparse matrices.  We compare our results to experiments in~\cite{Berinde-thesis}.
Since the lengths of signal used for experiments with SSMP in ~\cite{Berinde-thesis} are not powers of $2$, we compare the results of \cite{Berinde-thesis} for $N=20000$ with our results for a larger value of $N$. In particular, we choose $N=2^{15}>20000$. Our results are presented in Fig.~\ref{fig:ssmp}.  Since experiments in ~\cite{Berinde-thesis} used real measurements, we multiply the number of our (complex) measurements by $2$ for this comparison (note that the right panel of Fig.~\ref{fig:ssmp} is the same as the right panel of Fig.~\ref{fig:us-vs-spgl1}, up to the factor of $2$ in the number of measurements). We observe that our algorithm improves upon SSMP by a factor of about $1.15$ when $0.9$ success probability is desired. 

\begin{figure}[H]
 \includegraphics[width=3.1in]{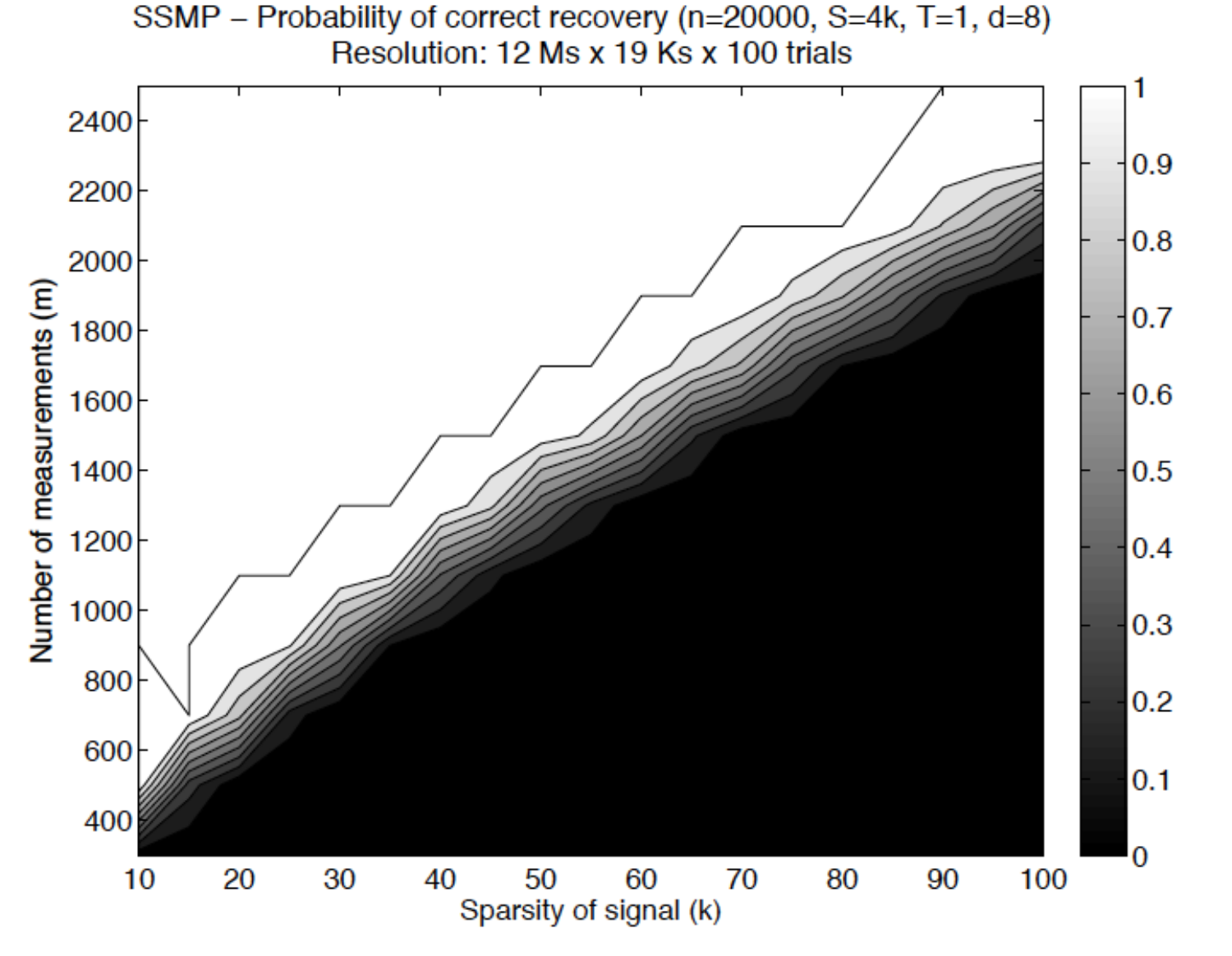}
  \includegraphics[width=3.1in]{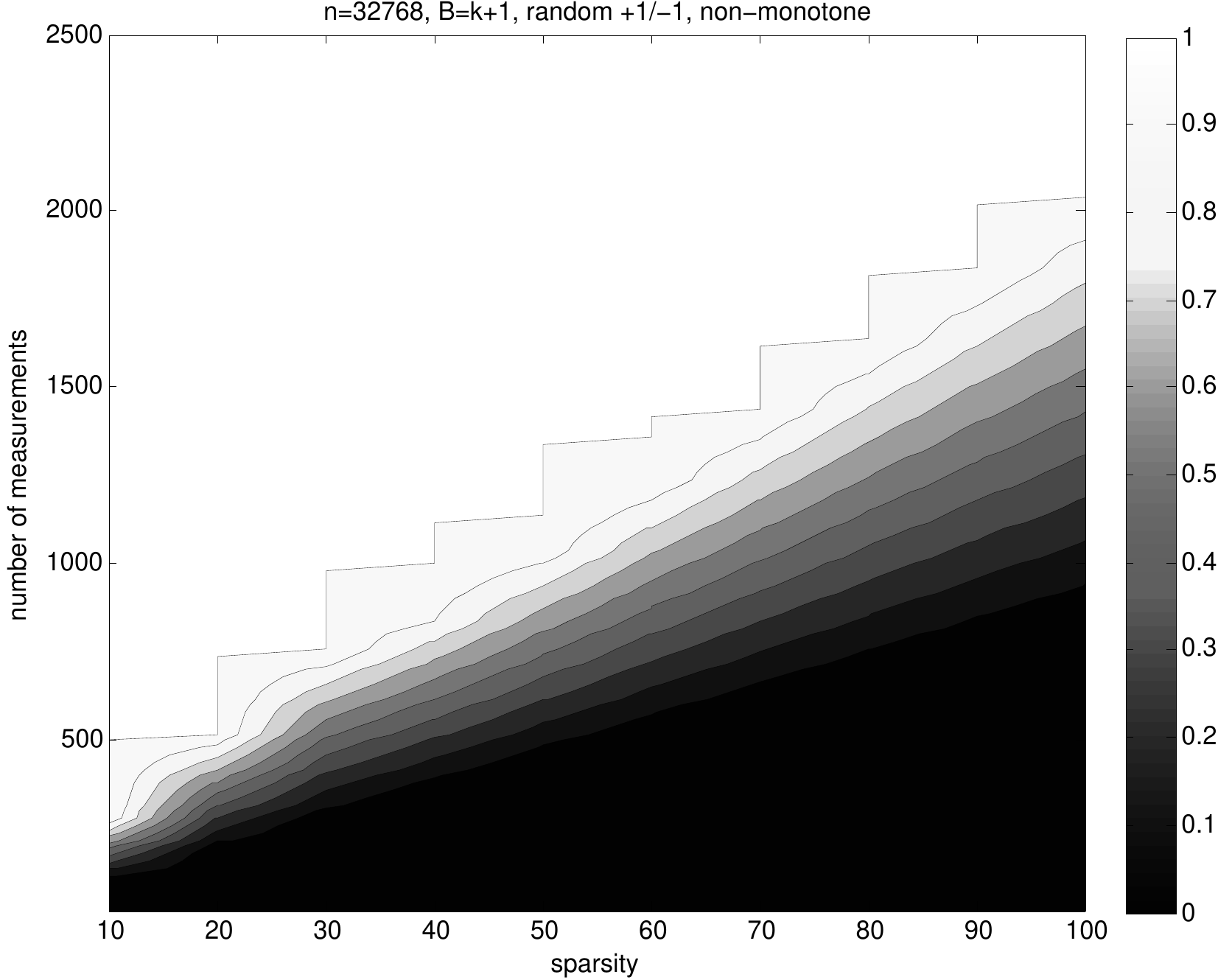}
\caption{Success probability as a function of sparsity and number of  measurements: SSMP (left panel) and Algorithm~\ref{alg:sfft} (right panel). The number of real measurements is reported.}\label{fig:ssmp}
\end{figure}

\newcommand{\etalchar}[1]{$^{#1}$}

\begin{appendix}
\section{Omitted proofs}\label{app:A}
\begin{proofof}{Lemma~\ref{lm:perm}}
\begin{equation*}
\begin{split}
\F^{-1}({P_{\Sigma, a, q} \hat x})_{\pi_{\Sigma, q}(i)}=&\frac1{\sqrt{N}}\sum_{j\in \nsq}  \omega^{j^T\Sigma (i-q)}(P_{\Sigma, a, q} \hat x)_j\\
=&\frac1{\sqrt{N}}\sum_{j\in \nsq}  \omega^{j^T\Sigma (i-q)}\hat x_{\Sigma^T(j-a)} \omega^{j^T\Sigma q}\\
=&\frac1{\sqrt{N}}\sum_{j\in \nsq}  \omega^{j^T\Sigma i}\hat x_{\Sigma^T(j-a)}\\
=&\frac1{\sqrt{N}}\sum_{j\in \nsq}  \omega^{i^T\Sigma^T(j-a+a)}\hat x_{\Sigma^T(j-a)}\\
=&\omega^{a^T\Sigma i}\frac1{\sqrt{N}}\sum_{j\in \nsq}  \omega^{i^T\Sigma^T(j-a)}\hat x_{\Sigma^T(j-a)}=\omega^{a^T\Sigma i} x_{i}\\
\end{split}
\end{equation*}
\end{proofof}

\begin{proofof}{Lemma~\ref{lm:filter-prop}}
Since $|\sin (\pi x)|\leq 1, |\sin (\pi x)|\leq |\pi x|$ for all $x$ and $|\sin (\pi x)|\geq 2|x|$ for $|x|\leq 1/2$, we have
\begin{equation}\label{eq:filter-ub}
\left|\frac{\sin(\pi (b-1) j/n)}{(b-1)\sin (\pi j/n)}\right|^{\fc}\leq \left(\frac{2}{(b-1) \pi (j/n)}\right)^{\fc}
\end{equation}
for all $j$. We also have that the maximum absolute value is achieved at $0$. Also, for any $j\in [n]$ such that $|j|\leq \frac{n}{2b}$ one has 
$$
\left|\frac{\sin(\pi (b-1) j/n)}{(b-1)\sin (\pi j/n)}\right|^{\fc}\geq \left|\frac{(1/2)(b-1) j/n}{(b-1) \pi (j/n)}\right|^{\fc}\geq \left|\frac{(1/2)}{\pi}\right|^{\fc}=\frac1{(2\pi)^{\fc}} G_0.
$$
which gives {\bf (1)}.

 {\bf (2)} follows from \eqref{eq:filter-ub} by writing 
 $$
|G_j|=\prod_{s=1}^d H^{\fc}_{j_s} \leq H^{\fc}_{||j||_\infty}\leq \left(\frac{2}{(b-1) \pi (||j||_\infty/n)}\right)^{\fc}\leq \left(\frac{2}{1+(b/n) ||j||_\infty}\right)^{\fc}
 $$
 as long as $b\geq 3$.

\end{proofof}

\begin{proofof}{Lemma~\ref{lemma:limitedindependence}}
We assume wlog that $j=0$. Let $g$ be the largest integer such that $2^g$ divides all of $i_1,\ldots, i_d$. We first assume that $g=0$, and handle the case of general $g$ later. For each $q\geq 0$ let 
\begin{equation}\label{eq:jq-def}
J_q=\{s\in [d]: i_s=2^q e_s, \text{~$e_s$ odd}\}.
\end{equation}
Since we assume that $g=0$, we have $J_0\neq \emptyset$.
We first prove that
  \[
\prob[||Mi||_\infty \leq t] \leq 2(t/n)^d
  \]
 when $M$ is sampled uniformly at random from a set $\mathcal{D}'$ that is a superset of the set of matrices with odd determinant, and then show that $M$ is likely to have odd determinant when drawn from this distribution, which implies the result.
 
We denote the $s$-th column of $M$ by $M_s$.  With this notation we have  $M i=\sum_{s=1}^d M_s i_s$, where $i_s$ is the $s$-th entry of $i$.
Let 
 $$
 \mathcal{D}'=\{M\in \Z^{d\times d}/n: \sum_{s\in J_0} M_{s}\neq {\bf 0}\text{~mod~}2\},
 $$
 i.e. we consider the set of matrices $M$ whose columns with indices in $J_0$ do not add up to the all zeros vector modulo $2$, and are otherwise unconstrained. 
We first note that for any $s\in [1:d]$ we can write
$$
M_s=M'_s+2M_s'',
$$
where $M'_s\in \{0, 1\}^s$ is uniform, and $M_s''$ is uniform in $[n/2]^d$ (and independent of $M_s'$). When $M$ is sampled from $\mathcal{D}'$, one has that $M'_s, s\in J_0$ are conditioned on not adding up to $0$ modulo $2$.

We first derive a more convenient expression for the distribution of $M_s$. In particular, note that $2M_s''$ is distributed identically to $2U$, where $U$ is uniform in $\nsq$ as opposed to $[n/2]^d$. Thus, from now on we assume that we have 
$$
M_s=M'_s+2M_s'',
$$
where $M'_s\in \{0, 1\}$ is uniform and $M_s''$ is uniform in $\nsq$.  One then has for all $s\in J_0$ (all operations are modulo $n$)
$$
M_s i_s=(M'_s+2M_s'')(1+2e_s)=M'_s+2(e_s M'_s+M_s''(1+2e_s))=M'_s+2Q_s,
$$
where $Q_s$ is uniform in $\nsq$ and independent of $M'_s$. This is because $M''_s(1+2 e_s)$ is uniform in $\nsq$ since $M''_s$ is (where we use the fact that $1+2e_s$ is odd). We have thus shown that the distribution of $\sum_{s\in J_0} M_s i_s$ can be generated as follows: one samples bits $M'_s\in \{0, 1\}^d,s\in J_0$ uniformly at random conditional on $\sum_{s\in J_0} M'_s\neq 0$, and then samples $Q_s\in \nsq$ independently and uniformly at random, and outputs 
$$
\sum_{s\in J_0} M'_s+2\sum_{s\in J_0}Q_s.
$$

It remains to note that the distribution of
 $$
\sum_{s\in J_0} M'_s+2\sum_{s\in J_0}Q_s+\sum_{k>0}  \sum_{s\in J_k} M_s i_s
 $$
 is the same as the distribution of 
 $$
\sum_{s\in J_0} M'_s+2\sum_{s\in J_0}Q_s.
 $$
 Indeed, this is because by definition of $\mathcal{D}'$ for any $k>0$ one has $i_s=2^k e_s$, and $\{M_s\}_{s\not \in J_0}$ are independent uniform in $\nsq$. As a consequence,
\begin{equation*}
\begin{split}
 2\sum_{s\in J_0}Q_s+\sum_{k>0}  \sum_{s\in J_k} M_s i_s&= 2\sum_{s\in J_0}Q_s+\sum_{k>0}  2^k\sum_{s\in J_k} M_s e_s\\
& = 2(\sum_{s\in J_0}Q_s+\sum_{k>0}  2^{k-1}\sum_{s\in J_k} M_s e_s)\\
\end{split}
\end{equation*}
 is distributed as $2U$, where $U$ is uniform in $\nsq$. 
 
 Thus, when $M$ is drawn uniformly from $\mathcal{D}'$,  we have that $\sum_{s=0}^d M_s i_s=Mi$ is uniformly random in $\nsq\setminus 2\nsq$, so
\begin{equation*}
\prob_{\Sigma\sim UNIF(\mathcal{D}')}[||\Sigma(i-j)||_\infty \leq t] \leq \frac1{1-2^{-d}}(t/n)^d\leq 2(t/n)^d.
\end{equation*}
  
So far we assumed that $g=0$. However, in general we can divide $i$ by $2^g$, concluding that $M(i/2^g)$ is uniform in $\nsq\setminus 2\nsq$, i.e. $Mi$ is uniform in $2^g\cdot (\nsq\setminus 2\nsq)$, and the same conclusion holds.

  It remains to deduce the same property when $\Sigma$ is drawn uniformly at random from $\gl$, the set of matrices over $\Z^{d\times d}$ with odd determinant. Denote the set of such matrices by $\mathcal{D}_{*}$. First note that 
  $$
  \mathcal{D}_* \subset \mathcal{D}'
  $$
  since the columns $J_0$ of a matrix $\Sigma\in \mathcal{D}_*$ cannot add up to $0$ mod $2$ since any such matrix would not be invertible over $\Z_n$.
We now use the fact that a nonzero polynomial of degree at most $d$ over $\Z_2$ is equal to $1$ with probability at least $2^{-d}$ under a uniformly random assignment, as follows, for example, from the fact that the minimum distance of a $d$-th order Reed-Muller code over $m$ binary variables is $2^{m-d}$(see, e.g. \cite{bok:MW}), 
  we have
  $$
  \prob_{\Sigma\sim UNIF(\Z^{d\times d})}[\Sigma\in \mathcal{D}_*]\geq 2^{-d},
  $$
  and hence
   $$
  \prob_{\Sigma\sim \mathcal{D}'}[\Sigma\in \mathcal{D}_*]\geq 2^{-d}.
  $$
We now get
    \[
\Pr_{\Sigma\sim UNIF(\mathcal{D}_{*})}[||\Sigma(i-j)||_\infty \leq t] \leq \Pr_{\Sigma\sim UNIF(\mathcal{D}')}[||\Sigma(i-j)||_\infty \leq t]/\Pr_{\Sigma\sim UNIF(\mathcal{D}')}[\Sigma \in \mathcal{D}_{*}] \leq 2(2t/n)^d.
  \]
  as required.
\end{proofof}

\begin{proofof}{Lemma~\ref{lm:hashing}}
  By Lemma~\ref{lemma:limitedindependence}, for any fixed $i$ and $j$ and any $t\geq 0$,
  \[
 \prob_{\Sigma}[\norm{\infty}{\Sigma(i-j)} \leq t] \leq 2(2t/n)^d.
  \]

We have
\begin{equation}\label{eq:u-delta}
  u_{\pi(i)} = \sum_{j\in \nsq} G_{o_i(j)}x_j \omega^{a^T \Sigma j}+\Delta_{\pi(i)}
\end{equation}
  for some $\Delta$ with $\norm{\infty}{\Delta} \leq ||x||_1\cdot N^{-\Omega(c)}\leq ||x||_2\cdot N^{-\Omega(c)}$ since we are assuming that arithmetic is performed using $c\log N$ bit words for a constant $c>0$ that can be chosen sufficiently large.
  We define the vector $v \in \C^n$ by $v_{\Sigma j} = x_j G_{o_i(j)}$, so that
  \[
  u_{\pi(i)} - \Delta_{\pi(i)} = \sum_{j\in \nsq} \omega^{a^Tj} v_j = \sqrt{N}\wh{v}_a
  \]
  so
  \[
  u_{\pi(i)} - \omega^{a^T\Sigma i}x_i - \Delta_{\pi(i)} = \sqrt{N}(\wh{v_{\overline{\{\Sigma i\}}}})_a.
  \]
  
  We have by  \eqref{eq:u-delta} and the fact that $(X+Y)^2\leq 2X^2+2Y^2$ 
  \begin{equation*}
  \begin{split}
   \abs{u_{\pi(i)}\omega^{-a^T\Sigma i} - x_i}^2 =\abs{u_{\pi(i)} - \omega^{a^T\Sigma i}x_i}^2\\
    \leq 2\abs{u_{\pi(i)} - \omega^{a^T\Sigma i}x_i - \Delta_{\pi(i)}}^2 + 2\Delta_{\pi(i)}^2\\    
    =2\abs{\sum_{j\in \nsq} G_{o_i(j)}x_j \omega^{a^T \Sigma j}}^2 +2\Delta_{\pi(i)}^2\\        
  \end{split}
  \end{equation*}
  
  By Parseval's theorem, therefore, we have
  \begin{equation}\label{eq:a-est}
  \begin{split}
    \expect_a[\abs{u_{\pi(i)}\omega^{-a^T\Sigma i} - x_i}^2]
    &\leq 2 \expect_a[\abs{\sum_{j\in \nsq} G_{o_i(j)}x_j \omega^{a^T \Sigma j}}^2] + 2\expect_a[\Delta_{h(i)}^2]\\
    &= 2(\norm{2}{v_{\overline{\{\Sigma i\}}}}^2 + \Delta_{h(i)}^2)\\
    &\lesssim\sum_{j\in \nsq \setminus \{i\}} \abs{x_j G_{o_i(j)}}^2+||x||^2_2\cdot N^{-\Omega(c)}\\
    &\lesssim\sum_{j \in \nsq \setminus \{i\}} \abs{x_j G_{o_i(j)}}^2+||x||^2_2\cdot N^{-\Omega(c)}\\
    &\lesssim\mu_{\Sigma, q}^2(i)+||x||_2^2\cdot N^{-\Omega(c)}.\\
    \end{split}
  \end{equation}

We now prove {\bf (2)}. We have
\begin{equation*}
\begin{split}
\expect_{\Sigma, q}[\mu^2_{\Sigma, q}(i)]&=\expect_{\Sigma, q}[\sum_{j \in \nsq\setminus \{i\}} \abs{x_j G_{o_i(j)}}^2].
\end{split}
\end{equation*}
Recall that the filter $G$ approximates an ideal filter, which would be $1$ inside $\B^\infty_{\pi(i)}(n/b)$ and $0$ everywhere else. We use the bound on $G_{o_i(j)}=G_{\pi(i)-\pi(j)}$ in terms of $||\pi(i)-\pi(j)||_\infty$ from Lemma~\ref{lm:filter-prop}, (2). In order to leverage the bound, we partition $\nsq=\B^\infty_{\pi(i)}(n/2)$ as 
$$
\B^\infty_{\pi(i)}(n/2)=\B^\infty_{\pi(i)}(n/b)\cup \bigcup_{t=1}^{\log_2 (b/2)}  \left(\B^\infty_{\pi(i)}((n/b)2^{t})\setminus \B^\infty_{\pi(i)}((n/b)2^{t-1})\right).
$$
For simplicity of notation, let $X_0=\B^\infty_{\pi(i)}(n/b)$ and $X_t=\B^\infty_{\pi(i)}((n/b)\cdot 2^{t})\setminus \B^\infty_{\pi(i)}((n/b)\cdot 2^{t-1})$ for $t\geq 1$.
For each $t\geq 1$ we have by Lemma~\ref{lm:filter-prop}, (2) 
$$
\max_{\pi(l)\in X_t} |G_{o_i(l)}|\leq \max_{\pi(l)\not \in  \B^\infty_{\pi(i)}((n/b)2^{t-1})} |G_{o_i(l)}| \leq \left(\frac{2}{1+2^{t-1}}\right)^{\fc}.
$$
Since the rhs is greater than $1$ for $t\leq 0$, we can use this bound for all $t\leq \log_2 (b/2)$.
Further, by Lemma~\ref{lemma:limitedindependence} we have for each $j\neq i$  and $t\geq 0$
$$
\prob_{\Sigma, q}[\pi(j)\in X_t]\leq \prob_{\Sigma, q}[\pi(j)\in \B^\infty_{\pi(i)}((n/b)\cdot 2^{t})]\leq 2(2^{t+1}/b)^{d}.
$$

Putting these bounds together, we get
\begin{equation*}
\begin{split}
\expect_{\Sigma, q}[\mu^2_{\Sigma, q}(i)]&=\expect_{\Sigma, q}[\sum_{j \in \nsq\setminus \{i\}} \abs{x_j G_{o_i(j)}}^2] \\
&\leq \sum_{j\in \nsq\setminus \{i\}}  \abs{x_j}^2 \cdot \sum_{t=0}^{\log_2 (b/2)} \prob_{\Sigma, q}[\pi(j)\in X_t]\cdot \max_{\pi(l)\in X_t} |G_{o_i(l)}|\\
&\leq \sum_{j\in \nsq\setminus \{i\}}  \abs{x_j}^2 \cdot \sum_{t=0}^{\log_2 (b/2)} (2^{t+1}/b)^{d}\cdot \left(\frac{2}{1+2^{t-1}}\right)^{\fc}\\
&\leq \frac{2^\fc}{B}\sum_{j\in \nsq\setminus \{i\}} \abs{x_j}^2 \sum_{t=0}^{+\infty}  2^{(t+1)d-\fc (t-1)}\\
& \leq 2^{O(d)} \frac{\norm{2}{x}^2}{B}
\end{split}
\end{equation*}
as long as $\fc\geq 2d$ and $\fc=\Theta(d)$, proving {\bf (2)}.
\end{proofof}

\end{appendix}
\end{document}